\pdfoutput=1  
\documentclass[11pt]{article}

\usepackage[T1]{fontenc}
\usepackage[utf8]{inputenc}
\usepackage{lmodern}
\usepackage{microtype}

\usepackage[margin=1in]{geometry}
\usepackage{setspace}
\usepackage{amsmath}
\usepackage{amssymb}
\usepackage{amsthm}
\usepackage{algorithm}
\usepackage{algpseudocode}

\usepackage{booktabs}
\usepackage{array}
\usepackage{multirow}
\usepackage{graphicx}
\graphicspath{{figures/}}
\usepackage{caption}
\usepackage{xcolor}

\newcolumntype{L}[1]{>{\raggedright\arraybackslash}p{#1}}

\usepackage[section]{placeins}

\usepackage{enumitem}
\usepackage[natbibapa]{apacite}
\definecolor{linkblue}{RGB}{0,68,170}
\usepackage[colorlinks=true,citecolor=linkblue,urlcolor=linkblue,linkcolor=linkblue,
            breaklinks=true]{hyperref}

\theoremstyle{definition}
\newtheorem{definition}{Definition}
\theoremstyle{plain}
\newtheorem{proposition}{Proposition}
\newtheorem{lemma}{Lemma}
\newtheorem{corollary}{Corollary}

\newcommand{\passat}[1]{\textup{pass}@#1}
\newcommand{\Tpub}{T_{\textup{pub}}}
\newcommand{\Thid}{T_{\textup{hid}}}
\newcommand{\bten}{b_{10}}
\newcommand{\bzo}{b_{01}}
\newcommand{\E}{\mathbb{E}}
\newcommand{\Prob}{\mathbb{P}}

\title{\bfseries Selection Without Signal, Recovery Through Expression:\\
A Measurement Study of Post-Hoc Falsification Operators\\
for Frozen Small Code Models}

\author{Mehmet \.{I}\c{s}can\thanks{Corresponding author. PythaLab, Y\i{}ld\i{}z
Technical University, Istanbul, Turkey. Email: \href{mailto:miscan@yildiz.edu.tr}{\texttt{miscan@yildiz.edu.tr}}.}\\[2pt]
\small PythaLab, Y\i{}ld\i{}z Technical University, Istanbul, Turkey}
\date{}

\begin{document}
\maketitle

\begin{abstract}
\noindent
Frozen small code models---at most about $1.5$ billion parameters and run locally without
fine-tuning---are attractive for offline and privacy-constrained use, but they often emit
plausible-but-wrong programs. A natural remedy is a \emph{post-hoc} operator that selects,
verifies, repairs, or otherwise re-processes the model's own samples without retraining it;
in its most principled form the operator is Popperian, attacking each candidate with a severe
test and keeping what survives. This paper measures whether such operators help. Under a single
deterministic execution oracle and a leakage-free, matched-compute protocol, twenty-six
semantic post-hoc operators---spanning selection, verification, repair, elimination,
portfolios, sound vetoes, and generation conditioning---are evaluated against a Best-of-$N$
(BoN) baseline; on the model cells and benchmarks tested, none improves held-out accuracy over
BoN. The negative is mechanistic rather than a tuning failure, and three measured forces
account for it: a \emph{coverage wall} (the weak model's hard-task failures are systematic, and
deeper sampling does not rescue them), a \emph{capability scissors} (a competent generator
leaves almost no discriminable error among the candidates that pass the visible tests), and a
\emph{near-empty consensus trap} (the visible-passing-but-hidden-wrong majority a leakage-free
selector needs almost never co-occurs with a correct alternative in the pool). A
distribution-free do-no-harm bound cannot guarantee a population harm rate of $\le\alpha$ even
at zero observed harm unless $n\ge 45$. Two operators help on a different axis, outside the
semantic output space. An expression-layer recovery (M1)---the only accuracy gain in this
study---returns correct programs the standard extractor discards through robust extraction and
public-test signature alignment; it does no harm ($\bten{=}0$), is leakage-free, and lifts
DeepSeek-Coder-1.3B by $+12$ tasks on HumanEval+ ($p{=}2.4\times10^{-4}$). An adaptive
consensus early-stop (ACE) is a calibrated compute-saving control, not a new adaptive-compute
method: $\approx 19\%$ saving at a zero-harm operating point (Hoeffding--Bentkus bound, not
corrected for $\tau$ selection), with larger savings forgoing it. The M1 gain and the selection
negative both replicate on HumanEval+ and MBPP+ across three model cells (ACE is reported on the
local benchmark only), with M1 stronger for the weaker model. The practical lesson is specific:
for a frozen small code model, fix the harness (extraction, serving, and the compute schedule)
and measure coverage before attributing failures to semantic post-hoc reasoning over the
existing samples.
\end{abstract}

\smallskip
\noindent\textbf{Keywords:} frozen small code models; post-hoc operators; Popperian
falsification; Best-of-$N$ sampling; do-no-harm certificate; expression-layer recovery;
test-time compute allocation; code-generation evaluation.

\section{Introduction}
\label{sec:intro}

Small language models are an increasingly practical target for local code
generation. Models in the $0.5$--$1.5$\,B parameter range run without
quantization on a single consumer GPU, respond in seconds, and never leave the
machine, which makes them attractive for offline assistants, classroom tooling,
and privacy-constrained deployments. Their competence is measured by the
$\passat{k}$ protocol introduced with Codex and HumanEval
\citep{chen2021humaneval} and extended by MBPP \citep{austin2021mbpp}, and
code-specialized small models such as DeepSeek-Coder \citep{guo2024deepseekcoder}
and Qwen2.5-Coder \citep{hui2024qwen25coder} now bring real synthesis to this
range. Their weakness is equally well known: without large-scale fine-tuning they
fail on edge cases, misread underspecified prompts, and emit plausible-looking
but subtly wrong programs. Best-of-$N$ (BoN) sampling---draw $k$ candidates,
return the first that passes the visible tests---raises pass rates but saturates
quickly.

A tempting response is to wrap the frozen model in a smarter \emph{post-hoc}
operator: a procedure that re-processes the model's own samples, without changing
a single weight, to recover a correct answer. Four families have been tried, and
each has a documented failure mode at small scale. \emph{Selection} picks a
better sample after generation: self-consistency marginalizes over reasoning
paths \citep{wang2022selfconsistency}, and its code analogue is execution
agreement---MBR-Exec ranks by the mutual agreement of execution behaviour
\citep{shi2022mbrexec}, CodeT scores candidates against model-generated tests
\citep{chen2022codet}, and Coder-Reviewer reranks by a generation-likelihood
criterion \citep{zhang2023coderreviewer}. \emph{Verification} tries to predict
correctness directly: tool-interactive critique \citep{gou2024critic} and process
reward models \citep{lightman2024letsverify} add an external check, yet large language model
(LLM) verification of code against a specification is systematically unreliable
\citep{jin2025selfcritiquefail}, and the binding bottleneck is fault-revealing
test generation, not synthesis \citep{bansal2026vibepass}. \emph{Repair} rewrites
a failed attempt: Self-Debug \citep{chen2024selfdebug}, Self-Refine
\citep{madaan2023selfrefine}, and Reflexion \citep{shinn2023reflexion} feed
execution traces or self-critique back into a revision---but intrinsic
self-correction does not reliably improve outputs without an external signal
\citep{huang2024cannot,valmeekam2023self}, can degrade them
\citep{zhang2025darkside}, and the self-generated-test variant inherits a
documented oracle bias \citep{chen2025revisitdebug}. \emph{Test-time compute
allocation} spends the budget adaptively: it can beat flat Best-of-$N$ on capable
models \citep{snell2024testtime}, but on code the \emph{online}
difficulty-adaptive variant can drop \emph{below} best-of-$k$
\citep{damani2024hard}.

The most principled versions of these operators are \emph{Popperian}: rather than
confirm a candidate, they attack it with the most severe test they can construct
and keep what survives \citep{popper1959logic,popper1963conjectures}. Severity
has a precise error-statistical meaning---a claim is corroborated only to the
extent that it has passed a test it would probably have failed were it false
\citep{mayospanos2006severe,mayo2025severe}---and in software engineering the
same intuition appears as falsification-driven verification
\citep{groce2015verified}. The closest agentic operationalization, POPPER,
validates a fixed natural-language hypothesis by sequential severe tests under
strict Type-I control \citep{huang2025popper}; the closest engineering competitor
wraps each model call in a verified contract \citep{banerjee2026severa}; and the
REFUTE benchmark asks directly whether models can \emph{create} counterexamples
to incorrect solutions, finding that even strong reasoning agents succeed on
$<9\%$ of cases \citep{sinha2025refute}. The appeal of the program is real: it is
cheap (no training), grounded (execution is a trustworthy oracle), and it rests on a
well-established epistemology in which knowledge grows by refutation rather than
accumulation.

This paper reports what happened when that program was applied, repeatedly and
adversarially, to a frozen small code model, and gives a calibrated account of why it
did not improve it, together with the two places it did. It is the companion to an
earlier study \citep{iscan2026scaffold} that asked whether attaching Popperian
\emph{vocabulary} to a coding agent's prompt improves its code, and found that
the gains came from scaffold \emph{structure}, not from the Popperian content:
naming severe tests did not help beyond a labels-only control. That paper's
explicit next step was to ``turn the method on a sharper target''---a skill built
around \emph{executable} falsification, ``generating, running, and selecting
against real counterexamples rather than rubric impressions.'' The present work
is that successor: it replaces the LLM-judge with a deterministic execution
oracle and asks the sharper question directly---does any \emph{executable}
post-hoc falsification operator make a frozen small model write more correct code?
Because code benchmarks can leak into pre-training \citep{matton2024leakage}, any
self-authored benchmark is treated as suspect, and replication is run on the augmented
HumanEval+ and MBPP+ suites, whose hidden tests expose plausible-but-wrong
programs the original suites miss \citep{liu2023evalplus,liu2025testadequacy}.

Across twenty-six distinct operators spanning the standard families---selection,
verification, repair, counterexample search, version-space elimination,
portfolios, sound vetoes, and generation conditioning---\emph{none improves
held-out accuracy over BoN at matched compute}. What makes the result worth
reporting is that the negative is
\emph{mechanistic and measured}. Three forces, each quantified on cached
candidate pools, jointly explain every null:

\begin{itemize}[leftmargin=1.4em,itemsep=2pt,topsep=2pt]
\item \textbf{Coverage wall.} The weak model's hard-task failures are
  \emph{systematic}, not stochastic: deeper sampling does not rescue them. At
  $k{=}16$ on a held-out hard set, $16/30$ tasks still produce no hidden-correct candidate
  at all---reproducing a published online-code negative for difficulty-adaptive
  allocation \citep{damani2024hard}.
\item \textbf{Capability scissors.} A \emph{competent} generator leaves almost no
  \emph{discriminable} error among the candidates that already pass the visible
  tests; whatever gain is available lands in $\passat{1}$/coverage, not in the
  operator. The gap \emph{sharpens} with a stronger model and with explicit
  reasoning.
\item \textbf{Near-empty consensus trap.} The exact regime a leakage-free
  selector needs---a visible-passing majority that is hidden-wrong while a
  correct alternative sits in the pool---almost never co-occurs. On two
  purpose-built, sound-veto-\emph{capable} trap benchmarks the model emitted the
  triggering bug on $0/10$ and $2/16$ tasks, never as the consensus majority; and
  $\approx 83\%$ of the model's real bugs are invisible to any sound metamorphic
  relation.
\end{itemize}

\noindent These mechanisms have a sharp statistical corollary that is made
explicit and proved: a distribution-free finite-sample do-no-harm certificate
cannot bound a population harm rate at $\le\alpha$ even at \emph{zero} observed
harm unless the sample is at least $n\ge\lceil\log\delta/\log(1-\alpha)\rceil$,
which is $45$ at $\alpha{=}0.05,\delta{=}0.10$ (Proposition~\ref{prop:power}).
Every ``provable do-no-harm'' claim made on a $\le 32$-task set in this program
is therefore under-powered by construction---an observation that disciplines the
present positive results.

Two operators do beat BoN, and the reason is that they do \emph{not} operate over the
model's \emph{semantic} output space, where the three mechanisms take effect.

\begin{itemize}[leftmargin=1.4em,itemsep=2pt,topsep=2pt]
\item \textbf{Expression-layer recovery (M1).} A large fraction of a weak
  model's apparent failures are \emph{correct programs the standard extractor
  discards}---a wrong function name (\texttt{def Add} where the test calls
  \texttt{add}), prose-wrapped code, a malformed fence, multiple definitions. A
  robust multi-strategy extractor plus \emph{signature alignment} (rename the
  single defined function to the public-test name) recovers this code. Applied
  only when the standard pipeline finds no visible-passer---so it displaces no
  standard public-passing success---M1 does no harm ($\bten{=}0$ on every cell)
  and is leakage-free (the name comes from the public tests, never the reference),
  at zero additional generation. It is the only deployed accuracy gain in the
  program.
\item \textbf{Compute reduction: adaptive consensus early-stop (ACE).} BoN commits on the first
  visible-passer, so the rest of the budget is wasted; an adaptive early-stop
  recovers that compute. ACE is reported with deliberate calibration: the
  zero-harm operating point ($\bten{=}0$, with a Hoeffding--Bentkus
  Learn-then-Test calculation \citep{angelopoulos2021ltt}) saves only $\approx 19\%$ of
  samples on the competent cell; the larger $\approx 64\%$ saving sits at
  an aggressive stop that incurs measurable regressions ($\bten{=}2$) and does not
  reach the bound. The compute win is real but modest, and it collapses on the
  weak model for the same coverage-wall reason.
\end{itemize}

A measurement paper built on a self-authored benchmark faces an obvious objection:
the instrument may be the result. The M1 accuracy gain and the selection negative are
therefore replicated on two standard external benchmarks (HumanEval+ and MBPP+) across three
model cells \citep{liu2023evalplus}; ACE is reported on the local benchmark only. M1 transfers \emph{more strongly} than on the
self-authored benchmark (DeepSeek-Coder-1.3B: $29\to41$, $+12$ tasks, $p{=}2.4\times10^{-4}$ on
HumanEval+; $128\to161$, $+33$, $p{=}1.2\times10^{-10}$ on MBPP+; $\bten{=}0$
throughout; on the instruction-tuned Qwen cells ($1.5$B/$0.5$B) the gain is small
and ns, $+1$ to $+4$), and the ``no leakage-free selector beats BoN'' negative
replicates on every cell: the selector-agnostic headroom is $0$ (the consensus trap
is empty), or where it is positive ($6$--$17$ across the five remaining cells) no
realizable leakage-free selector captures it (net $\le+3$, never significant).

The contributions separate three axes and are the following.
\begin{enumerate}[leftmargin=1.6em,itemsep=2pt,topsep=2pt]
\item A \textbf{matched-compute, leakage-free measurement study} of twenty-six
  \emph{semantic output-space} operators (Definition~\ref{def:semantic}) for frozen small
  code models under a single execution oracle, with a do-no-harm bar ($\bten$, McNemar)
  fixed in advance. The headline is a \textbf{mechanistic negative} for semantic post-hoc
  accuracy: none beats BoN at matched compute on the tested cells
  (Results and Discussion).
\item Three \textbf{measured mechanisms}---coverage wall, capability scissors,
  near-empty consensus trap---that jointly explain the negative, with a proven
  finite-sample power corollary that bounds what any do-no-harm certificate can
  claim (Proposition~\ref{prop:power} and Lemma~\ref{prop:selector}, measured in the Discussion).
\item An \textbf{expression-layer recovery (M1)}, on a different axis from the twenty-six:
  a harness/extraction fix that recovers correct-but-mis-expressed code and is the only
  deployed accuracy gain in this study (do-no-harm, leakage-free; in the Results). It is
  not a semantic operator.
\item A \textbf{calibrated compute-saving control (ACE)}, not a new adaptive-compute
  method: early stopping at no measured accuracy cost, with a Hoeffding--Bentkus bound at
  the selected operating point (the zero-harm saving stated honestly against the aggressive
  saving; in the Results). It saves compute, not accuracy.
\item An \textbf{external replication} on HumanEval+ and MBPP+ of the M1 accuracy gain and
  the selection negative across three model cells, closing the self-authored-benchmark gap
  (in the Results).
\end{enumerate}

The remainder of this paper is organized as follows. Section~\ref{sec:method} presents the
post-hoc operator framework (the leakage-free, matched-compute evaluation protocol and its
formal do-no-harm test), together with the twenty-six benchmark methods under study and the
two surviving operators. Section~\ref{sec:results} reports the measurement study: the
experimental setup, the accuracy survey against BoN, the two survivors, and the external
replication on HumanEval+ and MBPP+. Section~\ref{sec:discussion} weighs the result against
prior post-hoc and falsification-based methods and locates the three measured mechanisms that
account for it. Section~\ref{sec:conclusion} concludes.

All of this is read narrowly, and nulls and negatives are reported as first-class
results. The intended takeaway is a design rule, not a no-go theorem: for a
frozen small code model the dominant addressable loss is in the harness
(extraction, serving, compute schedule), not in clever post-hoc reasoning over
the model's semantics, and accuracy claims for such operators should be interpreted
in light of the coverage wall.

\section{Method}
\label{sec:method}

A post-hoc operator sits between a frozen generator and deployment, the setup shown in
Figure~\ref{fig:framework}. The generator
draws several candidate programs for a task; the operator inspects that pool
together with only the information available at deploy time---the prompt and the
visible tests, never the hidden tests or a reference solution---and returns one
program to run. It changes no weights and, in the matched-compute regime enforced
here, draws no extra samples beyond the baseline's budget. The paper's question
is whether \emph{any} such operator---however it selects, verifies, repairs,
eliminates, or conditions generation---returns a hidden-correct program more often
than the trivial first-passer baseline. The question is fixed precisely---the pool,
the operator, and the do-no-harm test---and every method under test is then catalogued,
each with its rule and equation (\S\ref{sec:opmath}).

\begin{figure}[tbp]
\centering
\includegraphics[width=0.85\linewidth]{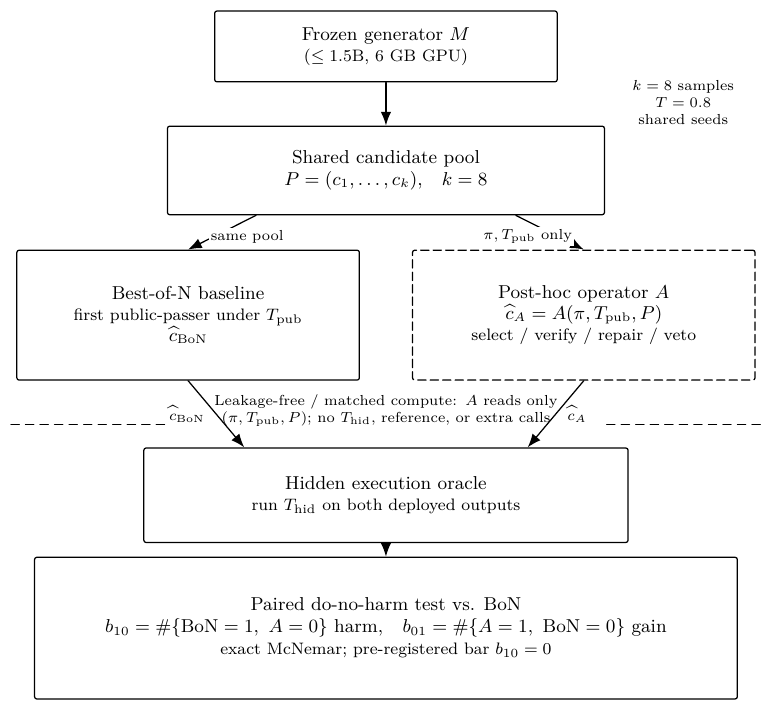}
\caption{The post-hoc operator setup. A frozen generator $M$ draws a shared candidate pool
$P=(c_1,\dots,c_k)$; a \emph{leakage-free}, \emph{matched-compute} post-hoc operator
$\mathcal{A}$ reads only the prompt and public tests $\Tpub$---never the hidden tests
$\Thid$ or a reference solution---and returns one program $\hat c$ to deploy. Every operator
is judged against the Best-of-$N$ (BoN) baseline by the do-no-harm test fixed in advance
($\bten$ harm / $\bzo$ gain, exact McNemar, bar $\bten{=}0$).}
\label{fig:framework}
\end{figure}

\noindent\textbf{Formal setup.}
A \emph{task} is a triple $t=(\pi,\Tpub,\Thid)$ of a prompt $\pi$, a set of \emph{public}
(visible) tests $\Tpub$, and a set of \emph{hidden} tests $\Thid$, each test an
(input, expected-output) pair run in a network-isolated subprocess; the pass predicate is
$\textsc{pass}(c,S)=\mathbf{1}[\,c\text{ satisfies every test in }S\,]$, so a program $c$
\emph{public-passes} when $\textsc{pass}(c,\Tpub){=}1$ and is counted \emph{correct} when
$\textsc{pass}(c,\Thid){=}1$ (on the external suites $\Tpub$ is the EvalPlus \texttt{base}
suite and $\Thid$ the augmented \texttt{plus} suite
\citep{liu2023evalplus,liu2025testadequacy}). For budget $k$ and temperature $T$ the frozen
generator $M$ produces a \emph{pool} $P=(c_1,\dots,c_k)$ with $c_i\sim M(\pi;T)$ and
per-sample seed
$\mathrm{seed}_i=\mathrm{int}\!\big(\mathrm{SHA256}(\texttt{id}\_i)[{:}8],16\big)$, so the
pool is reproducible and \emph{shared} byte-for-byte across all operators compared on it. A
\emph{post-hoc operator} is a map $\mathcal{A}\colon(\pi,\Tpub,P)\mapsto\hat c$ returning one
program $\hat c$ to deploy (an element of $P$, or a program rewritten from $P$); it is
\emph{leakage-free} when it is a function of $(\pi,\Tpub,P)$ alone, never reading $\Thid$ or a
reference solution, and \emph{matched-compute} when it issues the same number of generator
calls as the baseline (a repair or extra draw is charged against the baseline budget, so all
comparisons here are matched-compute by construction). The baseline is \emph{Best-of-$N$}
(BoN), which returns the first public-passer in pool order, $\hat c=c_{i^\star}$ with
$i^\star=\min\{i:\textsc{pass}(c_i,\Tpub){=}1\}$ (a deployed fallback if none passes), and is
itself leakage-free. Each operator is judged against BoN on $n$ tasks by the paired
off-diagonal counts \citep{fagerland2013mcnemar}
\[
  \bten=\#\{t:\textsc{BoN passes},\ \mathcal{A}\text{ fails}\}\ (\text{harm}),
  \qquad
  \bzo=\#\{t:\mathcal{A}\text{ passes},\ \textsc{BoN fails}\}\ (\text{gain}),
\]
with \emph{net gain} $\bzo-\bten$. The \emph{do-no-harm} bar, fixed before the deciding
runs, is $\bten{=}0$; a gain (a directional hypothesis) is judged by the exact one-sided
McNemar (binomial) test, which at $\bten{=}0$ gives $p=2^{-\bzo}$ and so reaches $p<0.05$ at
$\bzo\ge5$, with a stricter net-$\ge6$ floor (the two-sided exact threshold) declared in
advance. A comparison without a hypothesized direction (for instance, a leakage-free selector
against BoN) instead uses the two-sided exact test. Matched compute is essential
here, since an unmatched ``gain'' is just one more sample.

\noindent The do-no-harm bar drives every claim, so its statistical reach must be
fixed up front. At $\bten{=}0$ the bar is an observation, not yet a population
guarantee; the next proposition bounds when it can become one.

\begin{proposition}[Finite-sample floor for distribution-free do-no-harm]
\label{prop:power}
Let harm be Bernoulli with unknown rate $r$ and suppose $0$ harm events are
observed in $n$ i.i.d.\ trials. The exact one-sided upper confidence bound at level
$\delta$ is $r\le 1-\delta^{1/n}$, so certifying $r\le\alpha$ needs
$1-\delta^{1/n}\le\alpha$, i.e.\ $n\ge\lceil\log\delta/\log(1-\alpha)\rceil$. At
$\alpha{=}0.05,\ \delta{=}0.10$ this is $n\ge45$.
\end{proposition}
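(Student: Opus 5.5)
The plan is to derive the bound from the exact (Clopper--Pearson) one-sided construction. Let $H$ be the number of harm events, so $H\sim\mathrm{Bin}(n,r)$. For the observation $H=0$, the exact level-$\delta$ upper confidence limit is the largest $r$ that this observation does not reject at level $\delta$. That is the supremum of rates $r$ with $\Prob_r(H\le 0)>\delta$ (equivalently, the root of $\Prob_r(H=0)=\delta$). Since $\Prob_r(H=0)=(1-r)^n$, which is continuous and strictly decreasing in $r$ on $[0,1]$, the root is unique. Solving $(1-r)^n=\delta$ gives $\bar r=1-\delta^{1/n}$, which is the stated bound $r\le 1-\delta^{1/n}$.

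I would justify coverage directly, so the argument does not lean on an external reference. Fix a true rate $r$. The interval $[0,\bar r(H)]$ fails to cover $r$ only when $\bar r(H)<r$. The upper limit $\bar r(h)$ is nondecreasing in $h$, and $\bar r(0)<r$ holds exactly when $(1-r)^n<\delta$. So the miscoverage event is contained in the set of outcomes that are extreme in the lower tail at level $\delta$, and its probability is at most $\delta$. For the zero-count case this reduces to: if $\bar r(0)<r$, then $\Prob_r(H=0)=(1-r)^n<\delta$. I would also note that the bound is tight. No distribution-free procedure that certifies from $0$ observed harms can do better, because at rate $r=\alpha+\epsilon$ the event of zero observed harm still has probability $(1-r)^n$. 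If that probability exceeds $\delta$, the certificate would be wrong with probability above $\delta$. This tightness is what the word ``floor'' needs.

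The sample-size condition then follows by algebra. The requirement $1-\delta^{1/n}\le\alpha$ is equivalent to $\delta^{1/n}\ge 1-\alpha$, which is equivalent to $(\log\delta)/n\ge\log(1-\alpha)$. Both logarithms are negative, so multiplying by $n/\log(1-\alpha)<0$ reverses the inequality and gives $n\ge\log\delta/\log(1-\alpha)$. Because $n$ is an integer, this is the same as $n\ge\lceil\log\delta/\log(1-\alpha)\rceil$.

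For the numerical case, $\log 0.10/\log 0.95\approx 2.302585/0.051293\approx 44.89$, so the ceiling is $45$. As a check, $0.95^{44}\approx 0.1047>0.10$ while $0.95^{45}\approx 0.0994\le 0.10$.

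The only delicate step is the sign reversal when dividing by $\log(1-\alpha)$, together with stating precisely which confidence construction the phrase ``exact one-sided upper bound'' refers to. I would handle the latter by writing the coverage argument above explicitly, rather than citing Clopper--Pearson.
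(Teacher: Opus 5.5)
Your proposal is correct and follows the paper's proof. Both compute $\Prob_r(H=0)=(1-r)^n$, take the upper limit as the largest $r$ with $(1-r)^n\ge\delta$ to get $1-\delta^{1/n}$, and solve $1-\delta^{1/n}\le\alpha$ for $n$ using the sign reversal from $\log(1-\alpha)<0$, which gives $\lceil\log 0.10/\log 0.95\rceil=45$. Your explicit coverage sketch and your tightness remark (for $n<45$, any rule that certifies $r\le\alpha$ on zero observed harms errs with probability $(1-r)^n>\delta$ at some $r>\alpha$) go beyond the paper's two-line proof, and they are what justify calling this a ``floor'' for every distribution-free certificate rather than only for the Clopper--Pearson bound.
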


\noindent\emph{Proof.} $\Prob(\text{0 harm in }n)=(1-r)^n$; the $1-\delta$ upper
bound is the largest $r$ with $(1-r)^n\ge\delta$, i.e.\ $r=1-\delta^{1/n}$. Setting
this $\le\alpha$ and solving gives the bound, and
$\lceil\log0.10/\log0.95\rceil=45$. $\square$

\begin{corollary}
\label{cor:underpowered}
Any zero-harm result on a $\le32$-task benchmark (e.g.\ the sound-veto capstone and
metamorphic selection on Substrate-W) is under-powered: even at $\bten{=}0$ a
distribution-free Hoeffding--Bentkus certificate cannot issue. Only $n\ge45$
results (the ACE compute do-no-harm bound in the Results; HumanEval+ at $n{=}164$)
can carry one, so every $\le32$-task do-no-harm claim is treated as suggestive, not
proven---a discipline applied to the present positives as much as to the operators
that are rejected.
\end{corollary}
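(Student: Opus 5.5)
The plan is to obtain Corollary~\ref{cor:underpowered} as a direct monotonicity consequence of Proposition~\ref{prop:power}, taking the target level $\alpha=0.05$ and confidence $\delta=0.10$ fixed there. First, take a benchmark of $n\le 32$ tasks on which $\bten=0$, so zero harm events are observed. I treat the per-task harm indicators as i.i.d.\ Bernoulli$(r)$, exactly the model of the proposition. Under that model, the tightest distribution-free upper bound available at zero observed harm is the exact binomial (Clopper--Pearson) bound $1-\delta^{1/n}$.

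The step that needs the most care is justifying why no distribution-free certificate can do better. I would argue this by exhibiting the alternative $r=1-\delta^{1/n}$ itself. Under that alternative, the probability of observing zero harm is exactly $(1-r)^n=\delta$. So any valid level-$\delta$ procedure must fail to exclude this $r$ upon seeing zero harm. Hoeffding--Bentkus bounds are valid, and they are no tighter than the exact binomial tail for Bernoulli data. Hence they too cannot certify any level below $1-\delta^{1/n}$.

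Next comes the monotonicity step. The map $n\mapsto 1-\delta^{1/n}$ is decreasing in $n$. For $n=32$ it equals $1-0.1^{1/32}=1-e^{-\ln 10/32}$. Since $\ln 10/32\approx 0.072$, this is roughly $0.069>0.05$. Therefore, for every $n\le 32$, the best certifiable level exceeds $\alpha$, and no certificate of $r\le\alpha$ can issue. Equivalently, $n\le32<45=\lceil\log\delta/\log(1-\alpha)\rceil$.

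The last step covers the positive side. For $n\ge45$ the proposition's condition holds, so a zero-harm observation does certify $r\le\alpha$. This includes HumanEval+ at $n=164$ and the ACE calibration set whenever it has at least $45$ items. These are therefore the only results that can carry a certificate. The closing sentence of the corollary about treating the smaller results as suggestive is a methodological convention, not a mathematical claim, so I would note it as following from the preceding inequality rather than prove it.
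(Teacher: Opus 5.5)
Your proposal is correct and is essentially the paper's argument: the corollary is Proposition~\ref{prop:power} read at $n\le32<45$. Your extra link to the Hoeffding--Bentkus certificate also checks out directly, since at $\hat r=0$ Eq.~\eqref{eq:hb} reduces to $\mathrm{HB}_p=(1-\alpha)^n$, which is $\approx0.19>\delta$ at $n=32$, is $\le\delta$ exactly when $n\ge45$, and reproduces the reported $0.0165$ at $n=80$. One small fix in your optimality step: argue with an alternative $r\in(\alpha,\,1-\delta^{1/n})$, where $(1-r)^n>\delta$, rather than with the endpoint $r=1-\delta^{1/n}$, because at the endpoint a level-$\delta$ procedure is in fact allowed to reject.
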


\subsection{Benchmark methods}
\label{sec:opmath}

Every method under test is catalogued here. The twenty-six operators appear in pipeline
order---a refute-first spine (\#0), then selection (\#1--5), verification (\#6--8),
repair (\#9--14), elimination/veto (\#15--17), generation conditioning (\#18--23),
and compute allocation (\#24--25)---followed by the two survivors; each entry gives
the rule or score, its equation, and the claim it makes against BoN, and verdicts
with headline numbers are in the Results and
Appendix~\ref{app:ledger}. Several methods share one leakage-free \emph{execution
latent}: a deterministic probe bank $\Pi(t)$ of at most $24$ inputs (public-test
arguments plus type-aware mutations) on which each candidate $c$ is executed in the
sandbox to a fingerprint vector $\mathrm{fp}(c)$ over $\Pi$, each coordinate one of
$\{\texttt{repr}[{:}200],\,\texttt{ERR:}\langle type\rangle,\,\texttt{NOFUNC},\,
\texttt{NOCODE},\,\texttt{TIMEOUT}\}$, at no generator cost. Two notions of agreement
recur and must not be conflated: the per-probe \emph{plurality} (the most common
fingerprint) and the \emph{strict majority} (a value with count $\ge2$ and more than
half the clean candidates). Three operators issue a finite-sample certificate from
the Hoeffding--Bentkus (HB) Learn-then-Test bound \citep{angelopoulos2021ltt},
\begin{equation}\label{eq:hb}
  \mathrm{HB}_p(\hat r;n)=\min\Big\{1,\ e^{-n\,\mathrm{KL}(\min(\hat r,\alpha)\,\|\,\alpha)},\
  e\cdot\mathrm{BinomCDF}(\lceil n\hat r\rceil;\,n,\alpha)\Big\},
\end{equation}
where $\hat r$ is the observed harm rate; the certificate issues iff
$\mathrm{HB}_p\le\delta$, which by Proposition~\ref{prop:power} needs $n\ge45$ at
$(\alpha,\delta){=}(0.05,0.10)$. The three spectral probes share one feature map
$\phi_W$, defined with its method below. The method-wide notation and the fixed
constants are collected in Table~\ref{tab:notation}.

\begin{table}[tbp]
\centering
\caption{Method-wide notation and fixed constants. Every value appears in the text and is
transcribed here for reference; externally fixed choices carry their source.}
\label{tab:notation}
\small
\begin{tabular}{@{}lll@{}}
\toprule
Symbol / setting & Meaning & Value / source \\
\midrule
$k$ & pool size (samples per task) & $8$ \\
$T$ & sampling temperature & $0.8$ \\
$|\Pi|$ & probe-bank size (deterministic inputs) & $\le 24$ \\
$\dim\phi$ & code-embedding dimension & $1536$ \\
$\alpha$ & target harm rate (do-no-harm) & $0.05$ \\
$\delta$ & certificate confidence level & $0.10$ \\
$n_{\min}$ & finite-sample floor, Proposition~\ref{prop:power} & $45$ \\
$\tau$ & ACE commit threshold (reported operating point) & $4$ \\
$\alpha_{\textsc{lin}}$ & LinUCB exploration coefficient & $1$ \\
wavelet & basis / level / pad length & \texttt{db4} / $4$ / $64$ \\
PLURAL & programme temperatures & $\{0.6,0.6,1.0,0.2,0.6\}$ \\
certificate & Learn-then-Test bound, Eq.~\eqref{eq:hb} & \citet{angelopoulos2021ltt} \\
suites & augmented external benchmarks & \citet{liu2023evalplus} \\
\bottomrule
\end{tabular}
\end{table}

The survey is organized around one distinction, fixed here and used throughout.

\begin{definition}[Semantic output-space operator]
\label{def:semantic}
An operator is \emph{semantic output-space} if it selects, verifies, repairs, eliminates,
or conditions generation based on candidate code meaning, execution behavior, repair
traces, or learned correctness features. The twenty-six operators surveyed below
(\#0--\#25) are all of this kind. The two survivors act outside this space and are analyzed
separately: expression-layer extraction (M1) acts on \emph{how} the code is written, and
compute scheduling (ACE) acts on the sampling budget.
\end{definition}

Selection operators share one governing theory, which is stated first.

\begin{definition}[Consensus trap]
\label{def:trap}
A task is in the \emph{consensus trap} for pool $P$ if (a) the public-passing
majority of $P$ is hidden-\emph{wrong} and (b) some candidate in $P$ is
hidden-\emph{correct}.
\end{definition}

\begin{lemma}[Consensus-trap headroom under a majority-symmetry assumption]
\label{prop:selector}
Let $\mathcal{S}$ be a leakage-free selector (a function of $(\pi,\Tpub,P)$ alone) that
returns a public-passer, and assume the majority-symmetry regime in which, on tasks whose
public-passing majority is hidden-\emph{correct}, deviating from that majority does not
raise expected accuracy (off-trap deviation is, in expectation, weakly harmful). Under
this assumption $\mathcal{S}$ can net-gain over BoN only on consensus-trap tasks
(Definition~\ref{def:trap}): on a non-trap task it cannot turn a BoN failure into a
success without, on the matched population of non-trap tasks, turning some BoN success
into a failure. The expected net gain is then bounded by the trap's probability mass,
$\E[\bzo-\bten]\le\Prob[\,\text{task}\in\text{trap}\,]$.
\end{lemma}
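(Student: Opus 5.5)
The plan is to decompose the expected net gain by partitioning tasks into three classes according to the pool $P$ and the public tests. Let $N$ be the tasks with no public-passer, $C$ the non-trap tasks with a public-passer, and $R$ the consensus-trap tasks of Definition~\ref{def:trap}. Write the per-task net indicator $g(t)=\mathbf{1}[\mathcal{S}\text{ passes}]-\mathbf{1}[\textsc{BoN passes}]\in\{-1,0,1\}$. Then
\[
\E[\bzo-\bten]=\sum_{X\in\{N,C,R\}}\Prob[X]\,\E[g\mid X]
\]
per task, or $n$ times this for a benchmark of $n$ tasks. I will read the bound in the statement as a per-task rate; otherwise it should carry a factor $n$.

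The three classes are handled in turn.

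\begin{enumerate}
\item \emph{On $N$.} Both BoN and $\mathcal{S}$ must fall back, and $\mathcal{S}$ is constrained to return a public-passer. I will take the shared deployed fallback convention, under which the two coincide, so $g=0$ on $N$.
\item \emph{On $C$.} Here either no candidate in $P$ is hidden-correct, or the public-passing majority is hidden-correct. In the first case every returnable program is wrong, so $\mathcal{S}$ cannot pass and $g\le0$. In the second case $\mathcal{S}$ either returns a member of the hidden-correct majority or deviates from it. The majority-symmetry assumption says exactly that deviating does not raise expected accuracy relative to the majority-consistent choice. BoN's first public-passer is, by exchangeability of pool order (i.i.d.\ seeds), a uniformly distributed public-passer, so its expected accuracy equals the within-pool public-passer correctness rate. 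I therefore need to state the assumption so that BoN is the reference point, i.e.\ $\E[g\mid C]\le0$. This yields the qualitative claim: any BoN failure turned into a success on $C$ must be paid for, in expectation, by at least one success turned into a failure.
\item \emph{On $R$.} Trivially $g\le 1$, so $\E[g\mid R]\le 1$ and this class contributes at most $\Prob[R]$.
\end{enumerate}

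Summing the three contributions gives $\E[\bzo-\bten]\le\Prob[\text{task}\in\text{trap}]$.

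The main obstacle is the class $C$: making precise what "deviating from the majority" is compared against, since BoN is not itself a majority selector. I would first try to formalize the assumption directly as $\E[\mathbf{1}[\mathcal{S}\text{ correct}]\mid C]\le\E[\mathbf{1}[\textsc{BoN correct}]\mid C]$, justifying this as the content of "off-trap deviation is weakly harmful". If that reads as circular, the fallback is to phrase it as follows: conditional on the public-passer set, any leakage-free selection rule has expected accuracy at most the uniform-public-passer rate when the majority is correct. BoN attains that rate by exchangeability, which closes the argument.
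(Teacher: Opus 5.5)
\textbf{There is a genuine gap in the $C$ step.} Your three-way split is the paper's own decomposition: no hidden-correct candidate means nothing to gain, a hidden-correct majority is handled by the symmetry assumption, and trap tasks contribute at most their mass. Your per-task reading of the bound (otherwise a factor $n$) is right. The gap is the step $\E[g\mid C]\le 0$, which you flagged yourself, and neither of your repairs closes it from the stated hypothesis.

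Take a non-trap task whose public-passers are $A$ (hidden-wrong, its own fingerprint) and $B,C,D$ (one hidden-correct cluster, a strict majority).
\begin{enumerate}
\item Deviating from the majority (to $A$) lowers accuracy, so the majority-symmetry assumption holds.
\item The leakage-free rule ``return a member of the largest cluster'' is always correct.
\item BoN is a uniform public-passer by your own exchangeability argument, so it fails with probability $1/4$.
\end{enumerate}
So $\E[g]=1/4>0$ on a non-trap task, and a population of such tasks violates $\E[\bzo-\bten]\le\Prob[\text{trap}]=0$.

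This refutes your fallback formulation: the majority rule is itself leakage-free, and its accuracy exceeds the uniform-public-passer rate whenever the majority is correct and some passer outside it is wrong. More generally, the stated assumption points the other way. It says the majority choice weakly dominates every deviation, hence weakly dominates BoN, which mixes the majority choice with deviations; nothing in it bounds $\mathcal{S}$ by BoN. Your first formulation, $\E[\mathbf{1}[\mathcal{S}\text{ correct}]\mid C]\le\E[\mathbf{1}[\textsc{BoN}\text{ correct}]\mid C]$, does make the sum go through. But it is the lemma's conclusion on $C$ posited as a hypothesis, not something majority-symmetry delivers.

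The paper's sketch crosses the same point only by asserting that ``a rule that never deviates is BoN-consensus and gains nothing.'' That tacitly identifies first-passer BoN with the majority choice off the trap. To turn your argument into a proof, you must state such a hypothesis explicitly. Two options:
\begin{itemize}
\item On tasks with a hidden-correct majority, deviation is exactly neutral in expectation, so the majority rule, BoN, and $\mathcal{S}$ all have equal accuracy there.
\item The baseline is the consensus pick rather than the first passer.
\end{itemize}

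A smaller hole in the same step: ``(a) fails'' does not imply ``the majority is hidden-correct,'' because there may be no well-defined public-passing majority (for example, two singleton passers, one of them correct). Such a task is non-trap, contains a correct candidate, and lets $\mathcal{S}$ gain, yet the symmetry assumption says nothing about it. You need a convention, such as the modal cluster with a fixed tie-break, under which your case split of $C$ is exhaustive.
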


\noindent\emph{Proof sketch.} If no candidate is hidden-correct (Definition~\ref{def:trap}(b)
fails) the task is unsolvable by any selection rule. If the public-passing majority is
hidden-correct (Definition~\ref{def:trap}(a) fails), the symmetry assumption makes any
deviation from the majority weakly harmful in expectation, while a rule that never
deviates is BoN-consensus and gains nothing. So the only tasks on which a leakage-free
selector can \emph{net} gain are trap tasks. $\square$

\noindent This is a \emph{diagnostic} lemma, not a general impossibility theorem: it holds
under the stated majority-symmetry assumption and decomposes \emph{where} a leakage-free
selector's headroom can lie, rather than proving that every such selector must fail
off-trap. The assumption is plausible for execution-clustered pools: when the public-passing
majority is hidden-correct it is the modal-correct cluster, so deviating from it selects a
minority cluster that is correct strictly less often in expectation. Its force in this study
is empirical---the consensus-trap mass is measured to be near zero (in the Discussion), so the
right-hand bound is near zero on the cells tested.

\begin{itemize}[leftmargin=1.5em,itemsep=5pt,topsep=4pt]

\item \textbf{Refute-first pipeline} (REFv3, \#0). The baseline spine: draw the pool,
re-prompt the model to refute its own candidate against $\Tpub$ and revise, then
deploy the first public-passer; a family router sends counterexample and abstention
tasks to the search and abstention arms and the remainder to refute$\to$revise.
\emph{Procedural} (no closed-form score). \emph{Claim:} executable self-refutation
lifts the pass rate over plain BoN.

\item \textbf{Execution-consensus} (MBR-Exec, \#1) \citep{shi2022mbrexec}. Among
public-passers, take the per-probe plurality fingerprint $\mathrm{maj}_j$ and select
the candidate of greatest agreement with it,
\[
  \hat\imath=\arg\max_i \mathrm{agree}_i,\qquad
  \mathrm{agree}_i=\tfrac{1}{|\Pi|}\textstyle\sum_{j}\mathbf{1}\!\left[\mathrm{fp}_i(j)=\mathrm{maj}_j\right].
\]
This is the minimum-Bayes-risk-over-execution instance: a plurality vote over
deterministic execution fingerprints, not over model-generated tests.
\emph{Claim:} $\E[\bzo-\bten]>0$, which by Lemma~\ref{prop:selector} requires
the trap to carry positive mass.

\item \textbf{Embedding-medoid} (CEMS, \#2). Embed each public-passer's code to
$x_i\in\mathbb{R}^{1536}$ and return the medoid---the candidate of least total
cosine distance to the rest:
\[
  \hat\imath=\arg\min_i\textstyle\sum_{j\ne i} d_{\cos}(x_i,x_j),\qquad
  d_{\cos}(u,v)=1-\frac{u\cdot v}{\lVert u\rVert\,\lVert v\rVert}.
\]
\emph{Claim:} the central candidate is hidden-correct more often than BoN's first
passer.

\item \textbf{Behavioural-trace rerank} (BTR, \#3). The execution twin of CEMS:
replace the embedding by the fingerprint and minimise total Hamming distance,
$\hat\imath=\arg\min_i\sum_{j\ne i}h_{ij}$ with
$h_{ij}=\sum_k\mathbf{1}[\mathrm{fp}_i(k)\ne\mathrm{fp}_j(k)]$; a severity-weighted
variant weights probe $j$ by $1-(\text{plurality count}_j/n)$ so that
discriminating probes count more. \emph{Claim:} behavioural centrality beats
first-passer selection.

\item \textbf{Verisimilitude} (VRS, \#4) \citep{niiniluoto2014progress}. Rank by a
Popperian truthlikeness over probes. With $\mathrm{Ct}_T$ and $\mathrm{Ct}_F$ the
content of the severe probes a candidate survives and fails and the count proxy
$\mathrm{Ct}(S)=|S|/(|S|+1)$,
\[
  V\!s(c)=\frac{\mathrm{Ct}_T-\mathrm{Ct}_F}{\mathrm{Ct}_T+\mathrm{Ct}_F}\in[-1,1],
\]
with $V\!s{=}0$ for a tautology and $V\!s{=}{-}1$ when a clean-consensus probe makes
the candidate crash; select $\arg\max_c V\!s(c)$. \emph{Claim:} higher truthlikeness
tracks hidden correctness.

\item \textbf{Metamorphic selection} (MR, \#5). On Substrate-W, refute any candidate
that cleanly violates a declared \emph{sound} relation from the library
$\{$determinism $f(x){=}f(x)$, idempotence $f(f(x)){=}f(x)$, sortedness, length,
multiset preservation, membership $\mathrm{out}\subseteq\mathrm{in}$, permutation-
and inverse-consistency $g(f(x)){=}x\}$; a crash or timeout counts as a
non-violation, so the filter never refutes a correct program. A construction gate
enforces soundness (a known-bad must fail $\Thid$ \emph{and} violate a relation; a
known-good must satisfy all). \emph{Claim:} sound metamorphic agreement isolates the
hidden-correct candidate.

\item Where selection trusts the pool's own agreement, verification instead learns or
measures a correctness signal directly. \textbf{Learned verifier} (LCV, \#6). A logistic regressor of hidden
correctness on the $1536$-d code embedding $\phi$,
$g(\phi)=\sigma(w^\top\phi+b)$, fit with balanced classes under strict per-task
GroupKFold (no task in both train and test); deploy
$\arg\max_{i}g(\phi_i)$ among public-passers. \emph{Claim:} the area under the ROC curve
satisfies $\mathrm{AUC}(g)>\tfrac12$ leakage-free, usefully enough to rerank.

\item \textbf{Fingerprint verifier} (XFV, \#7). The same logistic $g$ on an
$8$-dimensional execution feature $\phi=$ (severity-weighted and plain consensus
agreement, error fraction, timeout/no-function fraction, Hamming distance to the
behavioural medoid, behavioural diversity, and the mean severity on agreeing and on
disagreeing probes), standardised on train folds. \emph{Claim:} execution features
separate hidden-correct above chance, tested against a label-permutation null.

\item \textbf{Latent abstention} (LEA, \#8). Force $k$ candidates under a no-refusal
prompt, embed them, and abstain when the mean pairwise cosine dispersion $\bar d$ of
the pool exceeds a train-fitted threshold
$\tau^\star=\arg\max_\tau\sum_i\mathbf{1}[(\bar d_i>\tau)=a_i]$ ($a_i$ the abstain
target), otherwise answer with the medoid. \emph{Claim:} dispersion gates ambiguous
tasks without discarding solvable ones.

\item When no candidate is right, repair tries to make one. \textbf{Self-debug} (M2, \#9) \citep{chen2024selfdebug}. One repair round,
$c'=M\!\left(\textsc{repair}(t,c,e)\right)$, the prompt branched by severity---a
\emph{close} candidate (public-passes, hidden-fails) is asked to fix only the edge
case, a \emph{far} candidate (public-fails) to fix the basic failure---given the
executed error $e$. \emph{Procedural.} \emph{Claim (matched compute):} the repaired
candidate beats a fresh draw.

\item \textbf{Bandit repair router} (BRR, \#10). A LinUCB contextual bandit selects
one of four repair arms $\{$regenerate, error-trace, edge-case, refute-revise$\}$.
Keeping per-arm $A_a$ (init the identity $I$) and $b_a$ (init $0$) with
$\theta_a=A_a^{-1}b_a$, it pulls and updates
\[
  a^\star=\arg\max_a\ \theta_a^\top x+\alpha\sqrt{x^\top A_a^{-1}x},\qquad
  A_{a^\star}\!\mathrel{+}=xx^\top,\quad b_{a^\star}\!\mathrel{+}=r\,x,
\]
where the context $x$ is a fixed Gaussian random projection of the normalised
embedding of [prompt $+$ failing code $+$ error] with a public-pass flag and a
constant bias term appended, the reward is $r=\mathbf{1}[\text{repair
hidden-passes}]$, $\alpha{=}1$, evaluated by counterfactual replay. \emph{Claim:} the learned router beats the best fixed arm.

\item \textbf{Counterexample-guided repair} (CEGIS-R, \#11). Build an ensemble
pseudo-oracle from the public-passers; a probe $j$ is a counterexample when the
target crashes while a clean plurality agrees, or disagrees with a confident
plurality:
\[
  \mathrm{CE}_j \iff \big(\neg\,\mathrm{clean}(c_j)\wedge\mathrm{clean}(\mathrm{maj}_j)\big)
  \ \vee\ \big(c_j\ne\mathrm{maj}_j\wedge \mathrm{conf}_j\ge0.6\big),\quad
  \mathrm{conf}_j=\tfrac{\text{plurality count}_j}{\#\text{voters}}.
\]
Here $\mathrm{clean}(\cdot)$ marks a non-error fingerprint and $\mathrm{maj}_j$ the
per-probe plurality of the shared latent above. Inject ``$f(x){=}A$, want $B$'' and
regenerate (up to two rounds). \emph{Claim:} a concrete counterexample repairs
better than a generic re-ask.

\item \textbf{Severity-conditioned regeneration} (SERA, \#12). Regenerate
conditioned on a \emph{severe} found counterexample and compare against a generic
re-ask and a confirmation placebo through the per-task contrast
\[
  \tau=\overline{\mathrm{pass}}\!\left(c\mid\text{severe CE}\right)
       -\overline{\mathrm{pass}}\!\left(c\mid\text{generic re-ask}\right),
\]
the load-bearing test being
$\overline{\mathrm{pass}}(\text{severe})>\overline{\mathrm{pass}}(\text{placebo})$.
\emph{Claim:} severity itself, not merely a concrete anchor, drives the gain.

\item \textbf{Prompt-normalisation} (M3, \#13). Inject the public-test signature and
its input/output examples into the prompt and draw one sample. \emph{Procedural} (no
score). \emph{Claim:} matching the weak model's expected format lifts the
single-sample pass rate.

\item \textbf{Thinking-iteration} (TIRF, \#14). Compare a visible-thinking mode to
NoThink-BoN at \emph{equal compute}: with the measured latency ratio $R$ a thinking
depth $n$ is matched to NoThink depth $\lfloor nR\rfloor$, and each arm's pass@$k$
uses the exact finite-pool hypergeometric
\[
  \passat{k}=\frac{a}{a+b}\left(1-\binom{d}{k}\Big/\binom{n}{k}\right),
\]
with $a$/$b$ the hidden-winning/-failing public-passers, $d$ the non-passers, and
pool size $n{=}8$ here.
\emph{Claim:} thinking beats NoThink at equal compute.

\item Elimination keeps only what survives refutation. \textbf{Version-space elimination} (PAVER, \#15). Treat the public-passers as
a hypothesis space and delete the refuted ones:
$\mathrm{refuted}(c)=\mathrm{mr}(c)\vee\mathrm{crash}(c)$ (sound policy) or
additionally $\vee\,\mathrm{diff}(c)$, disagreement with the strict majority (full
policy); deploy the first survivor, else BoN. Soundness is measured by the
false-refutation rate $\mathrm{FRR}$ (the fraction of hidden-correct candidates it
eliminates), and probe severity by the per-probe entropy
$-\sum_v\frac{n_v}{N}\log_2\frac{n_v}{N}$. \emph{Claim:} elimination raises accuracy
do-no-harm.

\item \textbf{Programme portfolio} (PLURAL, \#16) \citep{lakatos1968criticism}. Run
five generation programmes $\{$BoN, schema, diversity, simple, repair$\}$ at fixed
temperatures $\{0.6,0.6,1.0,0.2,0.6\}$ and deploy by a fixed tiebreak order.
\emph{Procedural.} \emph{Claim:} the portfolio beats single-programme BoN---tested
decisively against BoN's own $\passat{20}$ coverage.

\item \textbf{Conformal sound veto} (SCRC, \#17) \citep{angelopoulos2021ltt}. Keep
the largest fingerprint cluster as an anchor and override it only if the anchor
cleanly violates a sound metamorphic relation \emph{and} a clean alternative cluster
exists; certify the harm rate against full consensus by the shared $\mathrm{HB}_p$
bound of Eq.~\eqref{eq:hb} on the loss
$L_i=\mathbf{1}[\text{anchor correct}\wedge\text{vetoed}\wedge\neg\,\text{SCRC
correct}]$. \emph{Claim:} a sound veto raises accuracy with $\bten{=}0$ and a
finite-sample certificate.

\item Conditioning acts earlier, shaping the pool the operators see. \textbf{Schema hint} (PRAXIS, \#18). Prepend a family-matched
algorithmic-schema hint to the prompt, with a wrong-family hint as the placebo and
decoding fixed so that only the hint varies. \emph{Procedural.} \emph{Claim:} the
matched hint beats both vanilla and the mismatched placebo ($\Delta\ge0.03$,
$\bten\le1$).

\item \textbf{Verified-exemplar} (MNEMON, \#19). Prepend a verified
(public-passing) same-family exemplar solution to the prompt---an unrelated-family
exemplar is the placebo and a no-code preamble the scaffold---and deploy by
public-probe consensus. \emph{Procedural.} \emph{Claim:} relevance beats scaffold:
the matched exemplar must beat both vanilla and the placebo.

\item \textbf{Decoding-diversity union} (POLYGEN, \#20). Generate under diverse
decoding profiles ($\min$-$p$, typical-$p$, hot-broad) and select by the modal
public-probe cluster, $\hat c=$ the first member of
$\arg\max_{\text{cluster}}|\text{cluster}|$, falling back to BoN when the diverse
pool yields no public-passer. \emph{Claim:} diversity widens coverage the selector
can convert, as plan-level search does for capable models \citep{wang2025plansearch}.

\item \textbf{Wavelet probe} (WAVE-RL-F, \#21). Ask whether a multi-scale wavelet
basis of a candidate's code signal predicts hidden correctness beyond the
embedding. From the level-$4$ \texttt{db4} transform
$\{A,D_1,\dots,D_4\}=\mathrm{wavedec}(s)$ (signal padded to length $64$), with
detail-band energies $E_{D_j}=\langle D_j,D_j\rangle$, relative \emph{detail}
energies $\pi_j=E_{D_j}/\sum_j E_{D_j}$, and wavelet entropy
$H=-\sum_j\pi_j\log\pi_j$, the feature map collects
\[
  \phi_W(s)=\big[\log(1{+}E_{D_j}),\ \log(1{+}E_A),\ \pi_j,\ H,\ \text{band sums}\big],
\]
the approximation $A$ entering only as the separate log-energy $\log(1{+}E_A)$. GO iff the
grouped-CV $\mathrm{AUC}(\text{emb}{+}\phi_W)-\mathrm{AUC}(\text{emb})\ge0.03$ at
$p<0.05$ among public-passers. \emph{Claim:} $\phi_W$ adds discriminative signal.

\item \textbf{Frequency probe} (FREQ-RL, \#22). The same $\phi_W$ gate over richer
machine-space channels---token type, operator family, AST node type, keyword rhythm,
and character class---each an integer-coded sequence transformed by the wavelet
basis. \emph{Claim:} richer spectral channels add signal beyond the embedding.

\item \textbf{Exec-behaviour differential} (FREQ-RL\,v2, \#23). On auto-generated
inputs build each candidate's self-behaviour channels (crash, exception, output
length and change, type) and its deviation from a leave-one-out consensus and a
known-good reference; test whether deviation predicts hidden \emph{failure}
$y=1-\textsc{pass}(c,\Thid)$ above chance and whether it survives at deployment.
\emph{Claim:} behavioural deviation both detects \emph{and} repairs hidden failure
leakage-free.

\item Compute allocation changes not the pick but where the sampling budget is spent. \textbf{Adaptive allocation} (SCARF, \#24). At a fixed total budget
$b\!\cdot\!M$, greedily give the next sample to the task of maximal marginal
pass-gain (priority a difficulty signal $\hat\phi$, an oracle, or uniform), with
per-task success the exact finite-pool hypergeometric and the invariant
$\sum_i n_i=b\!\cdot\!M$ enforced on every replay. \emph{Claim:} difficulty-aware
reallocation beats flat BoN at equal compute.

\item \textbf{Easy$\to$hard reallocation} (ACE+, \#25). Lock the consensus-decided
easy tasks at the ACE stop, free $\sum_i(k_{\mathrm{base}}-n_i)$ samples, and
reinvest them on unresolved tasks up to $k{=}16$ by difficulty
$1-m_1/|\mathrm{prefix}|$ (leftover returned to the locked tasks), at matched total
compute; this reproduces the online-allocation negative \citep{damani2024hard}.
\emph{Claim:} reallocation lifts accuracy at matched compute.

\item \textbf{Expression-layer recovery} (M1, survivor; Algorithm~\ref{alg:m1}). On
tasks with \emph{no} standard public-passer, robustly re-extract a single function
from the candidate and rename it to the public-test name (the most-called bare
identifier in the asserts) by a bare-identifier substitution that leaves attribute
access intact, then deploy if it now public-passes. It fires only off the standard
pipeline, so it cannot displace a standard public-passer ($\bten{=}0$); the name
comes from public tests (leakage-free); zero extra generation. \emph{Result:} the
program's only deployed accuracy win (in the Results).

\item \textbf{Adaptive consensus early-stop} (ACE, survivor; Algorithm~\ref{alg:ace}).
Cluster the public-passers by execution fingerprint with modal cluster sizes
$m_1\ge m_2$ and stop at the first depth $n$ that is decided---lock-stop
$m_1-m_2>k-n$, or $\tau$-stop $m_1\ge\tau\wedge m_1>m_2$---returning the modal
cluster's embedding medoid; bound the harm rate against full consensus by
$\mathrm{HB}_p\le\delta$ (Eq.~\eqref{eq:hb}) at $n\ge45$. The compute saving is $1-\bar n/k$.
\emph{Result:} a modest compute reduction with a Hoeffding--Bentkus bound at the selected
operating point (in the Results).

\end{itemize}

The two surviving operators are specified in full as Algorithms~\ref{alg:m1}
and~\ref{alg:ace}; both fire only off the standard pipeline, so each is do-no-harm
by construction, and their measured effects are reported in
the Results.

\begin{algorithm}[tbp]
\caption{M1 expression-layer recovery (do-no-harm, leakage-free).}
\label{alg:m1}
\begin{algorithmic}[1]
\Require task $t$, public tests $\Tpub$, candidate pool $P=\{c_1,\dots,c_k\}$, standard extractor $E_{\mathrm{std}}$
\State $a_i \gets E_{\mathrm{std}}(c_i)$ for all $i$ \Comment{standard extraction}
\If{some $a_i$ passes $\Tpub$} \Return BoN first public-passer over $a_1,\dots,a_k$ \Comment{standard handles it: never altered}
\EndIf
\State $\mathit{name} \gets$ most-called bare identifier in $\Tpub$ \Comment{signature from public tests only (leakage-free)}
\For{$i = 1 \dots k$}
  \State $b_i \gets \textsc{RobustExtract}(c_i)$ \Comment{any \texttt{def}, prose-stripped, fence-tolerant}
  \If{$b_i$ defines exactly one function $f \neq \mathit{name}$} $b_i \gets \textsc{Rename}(b_i, f \to \mathit{name})$ \Comment{signature alignment}
  \EndIf
  \If{$b_i$ passes $\Tpub$} \Return $b_i$ \Comment{recovered}
  \EndIf
\EndFor
\State \Return BoN fallback \Comment{unchanged; $\bten{=}0$ since this branch only runs when no $a_i$ passed}
\end{algorithmic}
\end{algorithm}

\begin{algorithm}[tbp]
\caption{ACE adaptive consensus early-stop (compute reduction).}
\label{alg:ace}
\begin{algorithmic}[1]
\Require task $t$, sampler $M$, public tests $\Tpub$, budget $k$, commit threshold $\tau$, execution fingerprint $\mathrm{fp}(\cdot)$
\For{$n = 1 \dots k$}
  \State draw and grade $c_n \gets M(t)$; let $Q_n$ be the public-passers among $c_1,\dots,c_n$ \Comment{one sample per step}
  \State cluster $Q_n$ by execution fingerprint $\mathrm{fp}$; let $m_1\ge m_2$ be the two largest cluster sizes
  \If{$Q_n\neq\varnothing$ \textbf{and} \big($m_1-m_2 > k-n$ \ \textbf{or}\ $m_1\ge\tau\wedge m_1>m_2$\big)}
    \State \Return embedding medoid of the modal cluster, with $\mathrm{compute}=n$ \Comment{lock-stop / $\tau$-stop}
  \EndIf
\EndFor
\State \Return full-budget consensus pick: modal-cluster medoid over all $k$ public-passers (BoN first-passer if none), $\mathrm{compute}=k$
\Statex \textit{Reported operating point:} the smallest $\tau$ at which the early-stop incurs no measured harm ($\bten{=}0$) relative to the full-budget consensus pick (here $\tau{=}4$), read with the Hoeffding--Bentkus Learn-then-Test calculation of Eq.~\eqref{eq:hb} at $n\ge45$.
\end{algorithmic}
\end{algorithm}

\section{Results}
\label{sec:results}

All inference runs locally through the Ollama API, and no external call is made during
evaluation. The model cells are Qwen2.5-Coder-1.5B and -0.5B \citep{hui2024qwen25coder} and
DeepSeek-Coder-1.3B \citep{guo2024deepseekcoder}, with Qwen3-0.6B/1.7B (thinking and
non-thinking) on selected probes, each served on a single $6$\,GB consumer GPU. Pools are
generated once at $k{=}8$, $T{=}0.8$, and every operator is analysed post-hoc over the same
cached pools, so cross-operator comparisons share generation noise exactly. Four benchmarks are
used (Table~\ref{tab:benchmarks}): FALSIFY-BENCH-local ($52$--$80$ tasks over $7$ families:
simple functions, adversarial debugging, root-cause-vs-symptom repair, ad-hoc rescue,
ambiguous-spec abstention, faulty-oracle doubt, counterexample generation); Substrate-W ($32$
tasks, engineered so a sound metamorphic relation \emph{could} select correctly); and the
external HumanEval+ ($164$ tasks) and MBPP+ ($370$ retained, after a determinism gate drops
$8/378$ tasks with order-nondeterministic or non-serializable canonical outputs)
\citep{liu2023evalplus}. Ground truth is established by sandboxed execution of the canonical
solutions; the EvalPlus adapter was independently falsified on six checks on both external
suites, including the decisive one that a public-passing-but-wrong candidate is caught by
$\Thid$ on every sampled task (fraction $1.000$: $25/25$ on HumanEval+, $6/6$ on MBPP+). Each
operator was specified in advance with a falsifiable claim, a placebo or fair baseline where
one exists, and a decisive confound check \emph{before} its deciding run, following the design
principles for falsifiable, replicable empirical work \citep{vranjes2024design}; the protocol
is summarized in Figure~\ref{fig:protocol}.

\begin{figure}[tbp]
\centering
\includegraphics[width=0.9\linewidth]{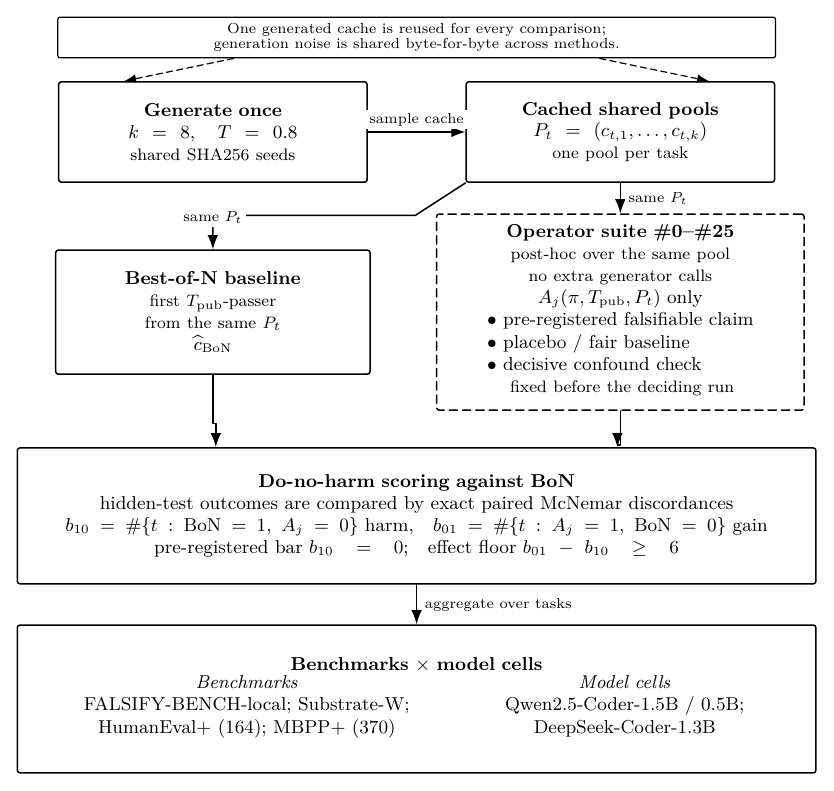}
\caption{The measurement protocol. Each task's candidate pool is generated once
($k{=}8$, $T{=}0.8$, shared SHA256 seeds) and cached; every operator is then run post-hoc over
the same pool under a falsifiable claim fixed in advance, a placebo or fair baseline, and a
decisive confound check fixed \emph{before} the deciding run, and is scored against
Best-of-$N$ (BoN) by the exact paired McNemar discordances ($\bten$ harm, $\bzo$ gain, bar
$\bten{=}0$, net $\ge6$) over the four benchmarks and the model cells.}
\label{fig:protocol}
\end{figure}

\begin{table}[tbp]
\centering
\caption{Benchmarks and test data used in the measurement study. Counts are the retained task
sizes; for MBPP+ a determinism gate drops $8/378$ tasks ($3$ order-nondeterministic, $5$
non-serializable), leaving $370$. The augmented hidden tests of HumanEval+ and MBPP+ expose
plausible-but-wrong programs the original suites miss
\citep{liu2023evalplus,liu2025testadequacy}; the model cards are
\citet{hui2024qwen25coder,guo2024deepseekcoder}.}
\label{tab:benchmarks}
\small
\begin{tabular}{@{}lllc@{}}
\toprule
Benchmark & Role & Hidden-test source & Tasks \\
\midrule
FALSIFY-BENCH-local & self-authored, $7$ families & sandboxed canonical exec & $52$--$80$ \\
Substrate-W & engineered sound-veto-capable & sandboxed canonical exec & $32$ \\
HumanEval+ & external, augmented \citep{liu2023evalplus} & EvalPlus \texttt{plus} & $164$ \\
MBPP+ & external, augmented \citep{liu2023evalplus} & EvalPlus \texttt{plus} & $370$ \\
\bottomrule
\end{tabular}
\end{table}

\phantomsection\label{sec:results-negative}%
No operator over the model's semantic output space beats Best-of-$N$ (BoN) for accuracy.
Table~\ref{tab:master} summarizes the verdicts: every operator over that space, spanning
selection, verification, repair, elimination, portfolio, sound veto, and generation
conditioning, is null, negative, or falsified against BoN at matched compute. Two patterns recur. First, operators that \emph{looked}
positive on a first pass did not survive a single confound check: the
research-programme portfolio's $+0.19$ over BoN-20 disappeared once BoN's own
$\passat{20}$ coverage was computed ($0.906$, essentially the oracle $0.938$)---the
``gain'' was BoN's broken first-passer \emph{selection}, not the portfolio.
Version-space elimination's $+2$ tasks came with a false-refutation rate of
$0.28$ (it eliminated $47/170$ hidden-\emph{correct} visible-passers, near a
coin flip), and it cleared the do-no-harm bar only through a fallback, not through
sound discrimination. Second, the one leakage-free signal that genuinely \emph{breached} the
discrimination wall---an execution-behavior differential on auto-generated probe
inputs, reaching an area under the curve (AUC) of $0.798$ for predicting hidden failure---still deployed at
$\Delta{=}0$ ($\bten{=}0$): the candidates it would re-rank already agree with
the consensus BoN commits to, so the signal has nothing to act on. Detecting a
bug and being able to \emph{do something about it without leakage} are different
problems, and only the first is solvable here.

\begin{table}[tbp]
\centering
\caption{Verdicts for the twenty-six post-hoc operators against Best-of-$N$ at
matched compute (frozen small code models). ``Do-no-harm'' is $\bten{=}0$;
``significant'' is net $\ge 6$ at $\bten{=}0$. No operator over the model's
\emph{semantics} clears both bars; the two survivors act on the \emph{expression}
and \emph{compute} axes. Full per-operator numbers
and falsification logs are in the released manifests.}
\label{tab:master}
\small
\begin{tabular}{@{}L{0.205\linewidth}L{0.36\linewidth}cL{0.20\linewidth}@{}}
\toprule
Family & Representative operators & \#ops & Verdict \\
\midrule
Selection &
  execution-consensus (MBR-Exec), embedding-medoid, behavioural rerank,
  verisimilitude, metamorphic selection &
  $5$ & null (ties BoN) \\
Verification &
  learned $\Prob(\text{hid}\mid\text{pub})$, fingerprint verifier,
  latent abstention &
  $3$ & null / below chance / net-negative gate \\
Repair &
  self-debug, bandit router, counterexample-guided, severity-conditioned,
  prompt-norm, thinking-iteration &
  $6$ & null / negative (compute-confounded or placebo-tie) \\
Elimination / portfolio / veto &
  version-space elimination, programme portfolio, conformal sound veto &
  $3$ & falsified (selection artifact / FRR$=0.28$) / $\Delta{=}0$ \\
Generation conditioning &
  schema hint, verified-exemplar, decoding-diversity, frequency/wavelet probe &
  $6$ & null / generic-scaffold / NO-GO at feature gate \\
Compute allocation &
  adaptive allocation (SCARF), easy$\to$hard reallocation (ACE+) &
  $2$ & null ($\Delta\approx0$; reproduces \citealt{damani2024hard}) \\
\midrule
Refute-first pipeline & test-aware refute$\to$revise prompt & $1$ & weak (mostly chain-of-thought, CoT; $p{=}0.23$) \\
\midrule
\textbf{Expression (survivor)} & robust extraction + signature alignment (M1) & $1$ & \textbf{positive} \\
\textbf{Compute (survivor)} & adaptive consensus early-stop (ACE) & $1$ & \textbf{modest; HB bound} \\
\bottomrule
\end{tabular}
\end{table}

A representative within-family detail: of $12$ hidden-failing candidates on the
purpose-built metamorphic benchmark, only $2$ violate any sound metamorphic
relation---so on this benchmark $\approx 83\%$ ($10/12$) of the model's hidden
failures are metamorphic-invisible, and a \emph{sound} selector (one that never
refutes a correct program) has almost nothing to fire on. Semantic-preserving
mutation exposes such inconsistencies as an auditing signal
\citep{soremekun2026mucoco}, but not, here, as a sound selector. This is not a
property of these particular relations; it is the empirical shape of where a
competent small model's errors live, and the mechanisms that account for it are
taken up in the Discussion.

\phantomsection\label{sec:survivors}%
Two operators nonetheless beat BoN, and on a \emph{different axis} than the twenty-six: an
expression-layer recovery (M1), which acts on how the model's code is written, and a modest
compute reduction (ACE) with a Hoeffding--Bentkus bound at the selected operating point, which
acts on the sampling schedule.

\phantomsection\label{sec:m1}%
The first survivor, M1, addresses a loss the capability scissors leaves untouched: that
scissors says there is no discriminable signal among the model's \emph{plausible} candidates,
whereas M1 attacks a different loss entirely, namely candidates the
model got right but \emph{expressed} in a form the standard extractor discards.
On DeepSeek-Coder-1.3B, of the candidates the standard Markdown extractor failed
to parse, a large majority contained a valid \texttt{def}; many used a wrong
function name (\texttt{def Add} where the public test calls \texttt{add}) or
wrapped the code in prose. None of this is a reasoning failure---it is an
expression/harness failure.

M1 is a robust multi-strategy extractor followed by \textbf{signature
alignment}: parse the required function name from the \emph{public tests} (the
most-called bare identifier in the assertions, never the reference solution),
locate the single defined function in the candidate, and rename it to match. It
is applied under a strict rule---\emph{recover only on tasks where the standard
pipeline finds no visible-passer}---so a task the standard pipeline already
handles is never altered. Because a deployed program is expected to pass the visible
tests, a task on which the standard pipeline yields no public-passer carries no
deployable baseline success for M1 to displace, and M1 in turn returns only a
recovered public-passer; it therefore cannot convert a standard success into a
failure, and $\bten{=}0$ is observed on every cell (Table~\ref{tab:m1}). It is
leakage-free (the name derives from public tests; an earlier dependency on the
reference solution was caught by a falsification probe and removed) and adds zero
generation (Algorithm~\ref{alg:m1}).

\begin{table}[tbp]
\centering
\caption{M1 expression-layer recovery vs.\ the standard extraction pipeline.
$\bten{=}0$ on every cell by construction (M1 fires only when the standard
pipeline finds no visible-passer). FALSIFY-BENCH-local: $n{=}52$;
HumanEval+: $n{=}164$; MBPP+: $n{=}370$. $p$ is the one-sided exact McNemar
(binomial) probability. The gain is large and significant only on the base,
low-coverage DeepSeek cell and small but do-no-harm on the instruction-tuned
Qwen cells (both $1.5$B and $0.5$B) on \emph{both} external benchmarks---an
expression-failure pattern that tracks the model, not the benchmark.}
\label{tab:m1}
\small
\begin{tabular}{@{}llcccccc@{}}
\toprule
Benchmark & Cell & std & M1 & $\bten$ & $\bzo$ & net & $p$ \\
\midrule
FALSIFY-BENCH-local & DeepSeek-Coder-1.3B & $25$ & $30$ & $0$ & $5$ & $+5$ & $0.031$ \\
FALSIFY-BENCH-local & Qwen2.5-Coder-1.5B  & $38$ & $45$ & $0$ & $7$ & $+7$ & $0.0078$ \\
\textbf{HumanEval+}  & \textbf{DeepSeek-Coder-1.3B} & $29$ & $\mathbf{41}$ & $0$ & $12$ & $\mathbf{+12}$ & $\mathbf{2.4\times10^{-4}}$ \\
\textbf{HumanEval+}  & Qwen2.5-Coder-1.5B  & $126$ & $128$ & $0$ & $2$ & $+2$ & $0.25$ (ns) \\
\textbf{HumanEval+}  & Qwen2.5-Coder-0.5B  & $114$ & $115$ & $0$ & $1$ & $+1$ & $0.5$ (ns) \\
\textbf{MBPP+}       & \textbf{DeepSeek-Coder-1.3B} & $128$ & $\mathbf{161}$ & $0$ & $33$ & $\mathbf{+33}$ & $\mathbf{1.2\times10^{-10}}$ \\
\textbf{MBPP+}       & Qwen2.5-Coder-1.5B  & $259$ & $263$ & $0$ & $4$ & $+4$ & $0.0625$ (ns) \\
\textbf{MBPP+}       & Qwen2.5-Coder-0.5B  & $199$ & $202$ & $0$ & $3$ & $+3$ & $0.125$ (ns) \\
\bottomrule
\end{tabular}
\end{table}

Table~\ref{tab:m1} reports the result. On the self-authored benchmark M1 gains $+5$
(DeepSeek) and $+7$ (Qwen) tasks, $\bten{=}0$, $p\le0.031$. The effect is
\emph{larger} on the external benchmarks for the weak model: on HumanEval+,
DeepSeek-Coder-1.3B recovers $13$ visible-passers the standard extractor dropped,
$12$ of them hidden-correct, lifting $29\to41$ ($+12$, $p{=}2.4\times10^{-4}$); on
MBPP+ it recovers $43$ dropped visible-passers, $33$ of them hidden-correct,
lifting $128\to161$ ($+33$, $p{=}1.2\times10^{-10}$). The instruction-tuned Qwen
cells (both $1.5$B and $0.5$B) gain little and ns ($+1$ to $+4$); these
instruction-tuned cells carry far fewer expression failures to recover (the $1.5$B
already public-passes $138/164$ HumanEval+ tasks). An independent
falsification harness re-derived all four checks from scratch on every cell
(every gained task had no standard visible-passer; re-extraction passes
\emph{both} public and hidden; the recovered body appears \emph{verbatim} in the
model's raw output; the aligned name is invariant to poisoning the hidden tests
and the reference): $4/4$ HOLD on all three model cells.

The honest scope: M1 \emph{recovers the model's own code}; it does not synthesize
new logic, and on the weak cell it is bounded by the coverage wall (only
$32/164$ HumanEval+ tasks yield any visible-passer at all). But within that
scope it is a real, leakage-free, do-no-harm, externally-valid accuracy win---and
the only one observed in this program. Its lesson generalizes beyond this paper: for a weak
frozen model a measurable share of ``failure'' is expression, not logic---a
surface-form dependence of the kind that moves measured performance independently
of model competence \citep{sclar2024formspread}---so the extraction harness should
be fixed before the model is blamed.

\phantomsection\label{sec:ace}%
The second survivor, ACE, is a modest compute reduction carrying a finite-sample
do-no-harm bound. A naive Best-of-$N$ draws all $k$ samples and returns the first
visible-passer, so the samples after that passer are wasted. \textbf{ACE} (adaptive
consensus early-stop) stops sampling once a commit threshold $\tau$ on agreeing
visible-passers is reached, recovering that wasted compute; its harm rate against the
full-budget consensus pick is bounded with Hoeffding--Bentkus Learn-then-Test
\citep{angelopoulos2021ltt} ($\alpha{=}0.05$, $\delta{=}0.10$; by
Proposition~\ref{prop:power} the bound can issue only at $n\ge45$).

ACE is reported with deliberate calibration, because the headline is a trade-off,
not a single number (Figure~\ref{fig:ace}, Algorithm~\ref{alg:ace}).
Table~\ref{tab:ace} gives the $\tau$-sweep on the competent cell.

\begin{figure}[tbp]
\centering
\includegraphics[width=0.66\linewidth]{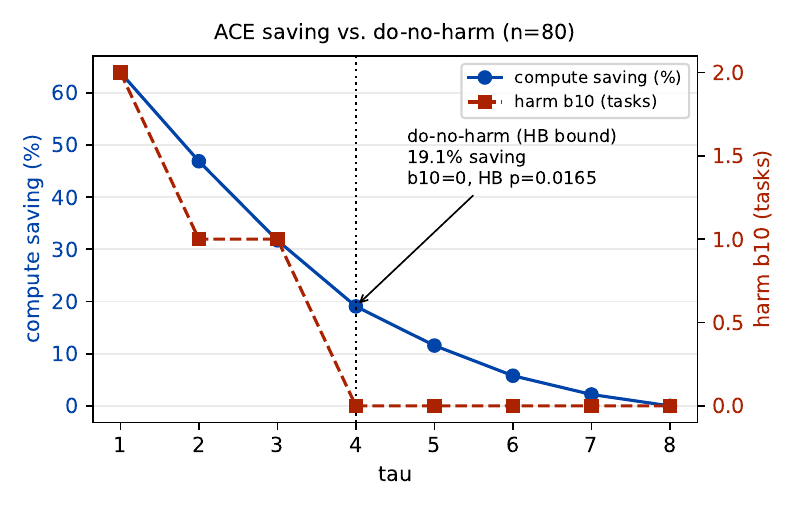}
\caption{ACE trades compute saving against do-no-harm. As the commit threshold
$\tau$ falls, the compute saving rises but so does the harm count $\bten$ (tasks the
full-budget consensus pick gets right and the early-stop gets wrong). The
zero-observed-harm operating point ($\bten{=}0$, Hoeffding--Bentkus bound $p{=}0.0165$,
not corrected for $\tau$ selection) is $\tau{=}4$ at $\approx 19\%$ saving; the
$\approx 64\%$ saving at $\tau{=}1$ comes with $\bten{=}2$ and does not reach the bound.
Regenerated from the released manifest by \texttt{figures/make\_figures.py}.}
\label{fig:ace}
\end{figure}

There is a genuine tension between \emph{saving} and \emph{do-no-harm}, and the harm
is measured against the \emph{full-budget consensus pick}---the modal-cluster medoid
taken over all $k$ public-passers, whose pass rate ($0.6125$) is within one task of
BoN's ($0.625$; $\Delta{=}0.0125$)---so that $\bten$ counts tasks where spending the
full budget would have been correct and the early stop is not. (Against a first-passer
deployment of BoN the lock-stop variant shows $\bten{=}3$, which is why the do-no-harm
reference is the full-budget consensus pick rather than the first-passer.) The aggressive
stop ($\tau{=}1$) saves $63.9\%$ of samples but
incurs $\bten{=}2$ (two such tasks) and does \emph{not} reach the bound; the
\emph{first} operating point that is empirically do-no-harm ($\bten{=}0$) is
$\tau{=}4$, which saves $19.1\%$ at Hoeffding--Bentkus $p{=}0.0165$. This minimal
harm-free $\tau$ is read off the evaluation set itself, with no separate calibration
split and no correction for scanning the $\tau$-grid, so the Hoeffding--Bentkus value
is a bound at this single operating point rather than a $\tau$-selection--corrected
guarantee. A selection-corrected certificate would require a held-out calibration split
or a multiple-$\tau$ Learn-then-Test correction, neither of which is claimed here.
The honest claim is therefore: \emph{ACE recovers $\approx 19\%$ of a full-budget
run's compute at no measured accuracy cost, with a finite-sample do-no-harm bound at
this operating point}; the larger $60\%$-class savings exist but forgo it. The
baseline is treated deliberately, because it bounds the claim: the $\approx 19\%$ is
measured against drawing every one of the $k$ samples, and a deployment that itself
stops at the first public-passer already recovers most of that compute---against such
a sequential-stop deployment the saving falls toward zero. ACE's contribution is
therefore the finite-sample do-no-harm bound on early stopping, not a large saving
over a sequential deployment.

\begin{table}[tbp]
\centering
\caption{ACE on Qwen2.5-Coder-1.5B $\times$ FALSIFY-BENCH-local ($n{=}80$; BoN
$\passat{}{=}0.625$ and the full-budget consensus pick $\passat{}{=}0.6125$ on this
cell). $\bten$ is the harm against the full-budget consensus pick (tasks it gets right
and the early stop does not); at $\tau{=}4$ the early stop equals the full-budget
consensus pick ($0.6125$, $\bten{=}\bzo{=}0$). The saving/do-no-harm trade-off is
explicit: the $\bten{=}0$ operating point is $\tau{=}4$ at $19.1\%$, not the $63.9\%$ of
$\tau{=}1$. An earlier note that bundled ``$64\%$ saving, $\bten{=}0$, certified''
conflated two $\tau$ rows; the manifest-exact values are below.}
\label{tab:ace}
\small
\begin{tabular}{@{}cccccl@{}}
\toprule
$\tau$ & ACE pass & $\bten$ & mean compute ($/8$) & saving & note \\
\midrule
$1$ & $0.625$ & $2$ & $2.89$ & $63.9\%$ & aggressive; exceeds bound ($p{=}0.53$) \\
$2$ & $0.625$ & $1$ & $4.25$ & $46.9\%$ & \\
$3$ & $0.6125$ & $1$ & $5.46$ & $31.7\%$ & \\
$\mathbf{4}$ & $0.6125$ & $\mathbf{0}$ & $6.48$ & $\mathbf{19.1\%}$ & \textbf{HB bound $p{=}0.0165$} \\
$5$--$8$ & $\approx 0.62$ & $0$ & $7.1$--$8.0$ & $11.6\%\to0\%$ & within bound, smaller \\
\bottomrule
\end{tabular}
\end{table}

The same shape holds on the second competent cell (Qwen2.5-Coder-0.5B, $n{=}50$):
the do-no-harm ($\bten{=}0$) point is $\tau{=}5$ at $2.25\%$ saving (HB $p{=}0.077$),
while the $\tau{=}1$ stop saves $61.5\%$ at $\bten{=}2$. On the $32$-task
Substrate-W cell no finite-sample bound can issue ($n<45$, Corollary~\ref{cor:underpowered});
and on DeepSeek-Coder-1.3B the saving collapses to $\sim 7$--$12\%$, the first
visible-passer arriving late under the coverage wall. Two honest framings
follow. First, ACE is a \emph{compute} win, not an accuracy win: at $k{=}8$
execution-consensus \emph{ties} BoN ($0.6125$ vs $0.625$, $\Delta{=}0.0125$), confirming
the accuracy survey above. Second, the saving is real but
modest, and it is reported as such---an instance of the same calibration applied
to the operators that are rejected.

The accuracy-facing dual of ACE fails as predicted, and is reported here for
completeness: it is not a third survivor but a
predicted null. Reinvesting ACE's freed compute as deeper sampling on hard tasks
(the easy$\to$hard move) is null at matched compute: $\Delta{=}+0.0038$
($\approx 0.3$ of $80$ tasks), the difficulty signal is inert (its allocation
equals a uniform placebo and the oracle to the digit), and the gain is not
coverage-backed. This is the
coverage wall again, and it reproduces the published online-code negative
\citep{damani2024hard}.

\phantomsection\label{sec:external}%
A measurement program built on a self-authored benchmark is only as trustworthy
as the claim that the instrument is not the result. The M1 accuracy gain and the selection
negative were therefore re-run on two standard external benchmarks---HumanEval+
($164$ tasks) and MBPP+ ($370$ tasks), each on \emph{three} model cells
(DeepSeek-Coder-1.3B, Qwen2.5-Coder-1.5B, and the smaller Qwen2.5-Coder-0.5B)
\citep{liu2023evalplus}---using the same $k{=}8$ pools and the same leakage-free,
post-hoc protocol (Figure~\ref{fig:external}). The MBPP+ ground-truth oracle is
built by executing the canonical solutions under a determinism gate that drops the
$8/378$ tasks whose canonical oracle is order-nondeterministic ($3$) or
non-serializable ($5$, e.g.\ a returned \texttt{re.Match}), so exact-match grading
is sound on every retained task; the adapter is independently falsified $6/6$ on
\emph{both} benchmarks, including the decisive discrimination-gap-transfers check
(fraction $1.000$).

\begin{figure}[tbp]
\centering
\includegraphics[width=\linewidth]{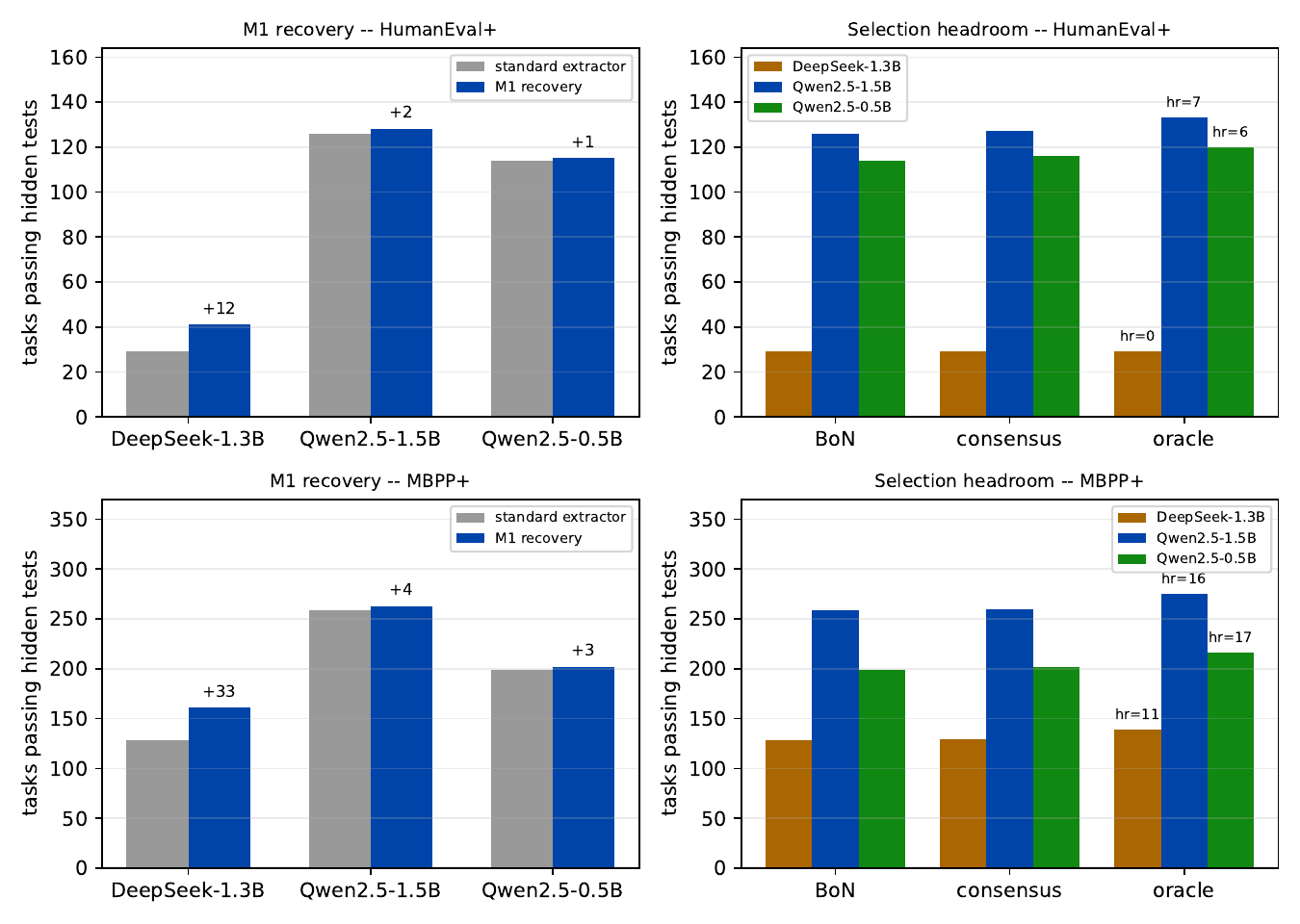}
\caption{External replication on both benchmarks and all three model cells (top:
HumanEval+, $n{=}164$; bottom: MBPP+, $n{=}370$). \emph{Left column:} M1
expression-layer recovery (standard extractor vs.\ M1) lifts the base, low-coverage
DeepSeek cell substantially ($+12$ on HumanEval+, $+33$ on MBPP+; $\bten{=}0$) and
the instruction-tuned Qwen cells ($1.5$B/$0.5$B) only slightly ($+1$ to $+4$; they
already public-pass most tasks). \emph{Right column:} the selector-agnostic ceiling
(BoN / consensus / oracle): where the oracle$-$BoN headroom is positive ($6$--$17$)
the realizable consensus selector still captures at most $+3$, never significant
($p\ge0.25$). Regenerated from the released manifests by
\texttt{figures/make\_figures.py}.}
\label{fig:external}
\end{figure}

M1 transfers, more strongly than locally, on both external benchmarks.
Table~\ref{tab:m1} gives it: DeepSeek-Coder-1.3B gains $+12$ on HumanEval+
($p{=}2.4\times10^{-4}$) and $+33$ on MBPP+ ($p{=}1.2\times10^{-10}$), both at
$\bten{=}0$; the do-no-harm and leakage-free properties survive an independent
$4/4$ falsification on every cell. The same pattern holds across both benchmarks
and all three cells: the gain is large and significant on the base, low-coverage
DeepSeek model and small but still do-no-harm on the instruction-tuned Qwen cells
($+1$ to $+4$, all ns), which already public-pass most tasks and so have few
expression failures left to recover. The expression-layer loss is therefore a
property of the weak model's output, not of the benchmark's formatting, and the
recovered gain grows with the number of tasks where the model emits
correct-but-mis-expressed code.

The selection negative replicates exactly.
Table~\ref{tab:sel} reports the selector-agnostic ceiling
$\text{headroom}=\text{oracle}-\text{BoN}$---the most a \emph{perfect} selector
among visible-passers could gain. On the weak cell, headroom is $0$: whenever
BoN's first visible-passer is hidden-wrong, \emph{no} visible-passer is
hidden-correct, so the consensus trap (Definition~\ref{def:trap}) is empty and
Lemma~\ref{prop:selector} gives a bound of $0$. Where headroom is positive it
is never leakage-free-addressable: a realizable consensus selector captures at
most net $+3$ and is never significant ($p\ge0.25$) on every such cell---HumanEval+
Qwen-1.5B/0.5B (headroom $7$/$6$), MBPP+ DeepSeek (headroom $11$), and MBPP+
Qwen-1.5B/0.5B (headroom $16$/$17$). ``No leakage-free selector beats BoN'' thus
holds on both external benchmarks and on all six cells (three models $\times$ two
benchmarks), even where a sizeable oracle headroom exists---the gap is real but not
leakage-free-addressable, a qualified strengthening of the local negative.

\begin{table}[tbp]
\centering
\caption{Selection vs.\ BoN on the external benchmarks ($k{=}8$).
$\text{headroom}=\text{oracle}-\text{BoN}$ is the selector-agnostic ceiling. Every
cell has headroom $0$ (empty consensus trap) or, where headroom exists ($6$--$17$), a
realizable consensus selector that captures at most net $+3$ of it and is never
significant ($p\ge0.25$). Here the BoN-vs-consensus $p$ is the two-sided exact
McNemar (binomial) probability, used because no gain direction is hypothesized for
selection; this contrasts with the one-sided gain test of Table~\ref{tab:m1}.}
\label{tab:sel}
\small
\begin{tabular}{@{}llccccl@{}}
\toprule
Benchmark & Cell & BoN & consensus & oracle & headroom & McNemar (two-sided) \\
\midrule
HumanEval+ & DeepSeek-Coder-1.3B & $29$ & $29$ & $29$ & $0$ & $\bten{=}0,\ \bzo{=}0$ \\
HumanEval+ & Qwen2.5-Coder-1.5B  & $126$ & $127$ & $133$ & $7$ & $\bten{=}1,\ \bzo{=}2$ ($p{=}1.0$) \\
HumanEval+ & Qwen2.5-Coder-0.5B  & $114$ & $116$ & $120$ & $6$ & $\bten{=}0,\ \bzo{=}2$ ($p{=}0.5$) \\
MBPP+ & DeepSeek-Coder-1.3B & $128$ & $129$ & $139$ & $11$ & $\bten{=}0,\ \bzo{=}1$ ($p{=}1.0$) \\
MBPP+ & Qwen2.5-Coder-1.5B  & $259$ & $260$ & $275$ & $16$ & $\bten{=}1,\ \bzo{=}2$ ($p{=}1.0$) \\
MBPP+ & Qwen2.5-Coder-0.5B  & $199$ & $202$ & $216$ & $17$ & $\bten{=}0,\ \bzo{=}3$ ($p{=}0.25$) \\
\bottomrule
\end{tabular}
\end{table}

Together, the external replication closes the gap that would otherwise sink the
study: the accuracy win (M1) and the central negative (selection cannot beat BoN)
both hold off the self-authored benchmark.

\section{Discussion}
\label{sec:discussion}

In this study, twenty-six Popperian post-hoc operators are evaluated over a frozen small code
model's own candidate pool, each judged against a matched-compute, leakage-free Best-of-$N$
(BoN) baseline by the do-no-harm (DNH) test of the formal setup ($\bten$ harm, $\bzo$ gain,
bar $\bten{=}0$). The negative, the two survivors, and the prior methods are weighed in turn,
and the mechanisms that produce them are stated with their measurements.

It is not shown that falsification is useless, nor that post-hoc operators never
help; what is shown is narrower and measurable: for a \emph{frozen small}
code model, on the benchmarks tested, no operator over the model's \emph{semantic}
output space converts to a held-out accuracy gain over matched-compute Best-of-$N$
(BoN), and the reason is structural---coverage wall, capability scissors, and a near-empty
consensus trap---not a tuning artifact. The same three mechanisms predict where
the result would change: a weaker visible-test filter (so that visible-passing,
hidden-wrong majorities become common) would refill the consensus trap, and a
genuinely \emph{coverage}-increasing intervention (a stronger generator, or
retrieval that adds new correct candidates rather than re-ranking existing ones)
would move the wall. Operators that re-rank what is already there cannot.

\phantomsection\label{sec:mechanisms}%
The observed negative is unlikely to be explained by a single tuning failure, because the
same pattern appears across operator families and confound checks. Three forces, each measured
on the cached pools, account for it, and together they bound what any do-no-harm certificate
can promise.
The first force is a coverage wall. The weak model's hard-task failures are systematic. Generating
deeper pools ($k{=}16$) for $30$ held-out hard tasks, $16/30$ still produce \emph{no
hidden-correct candidate even at} $k{=}16$; $11/30$ were already covered at $k{=}8$ (so the
loss was selection, not coverage); and only $3/30$ gain a hidden-correct candidate, of which
just $1$ gains a \emph{visible-passing} one that a leakage-free selector could grab.
Coverage---the fraction of tasks solved by \emph{any} sample, the binding axis for repeated
sampling \citep{brown2024monkeys}---does not grow here, directly reproducing the published
finding that online difficulty-adaptive allocation on code can fall below best-of-$k$
\citep{damani2024hard}: there is often nothing in the budget to reallocate \emph{to}.

The second force is a capability scissors. Conditioned on passing $\Tpub$, the residual
hidden-correctness signal is at chance. On $255$ visible-passing candidates from
Qwen2.5-Coder-1.5B a wavelet/embedding probe scores AUC $0.479$ for hidden correctness
($\Delta$AUC vs.\ embedding $-0.004$, $p{=}0.60$), even though the same features predict
\emph{public-pass} (well-formedness) at AUC $0.985$. The discriminative signal a competent
generator leaves among its plausible candidates is negligible---an instance of the
generation--verification gap that scales with pre-training compute, and so is small for a weak
model \citep{song2025mindthegap}; what gain exists has already been captured by $\passat{1}$
and by BoN's visible-test filter. The gap \emph{sharpens} with capability: enabling reasoning
on a stronger model raises $\passat{1}$ but lowers the falsification signal
(split-discriminating tasks drop from $5$ to $1$). Reaching for a second or larger model as
the judge does not escape this, since more accurate models carry \emph{more} correlated errors
\citep{kim2025correlated}. ``The model can't think'' was a $\passat{1}$ limit, not a
falsification bottleneck. The scissors has two limbs: a competent generator (Qwen) leaves no
\emph{discriminable} error among its plausible candidates, while an incompetent one (DeepSeek,
with selector-agnostic headroom $0$ on HumanEval+) leaves no \emph{recoverable} headroom at
all. The thin discriminable middle is the only place a selector could act, and a frozen small
model rarely sits in it.

The third force is the near-empty consensus trap. By Lemma~\ref{prop:selector} a leakage-free
selector can net-gain only on \emph{consensus-trap} tasks (Definition~\ref{def:trap})---those
whose public-passing majority is hidden-wrong while a correct candidate still survives in the
pool. This regime has independently been named the \emph{consensus trap} and shown to require
generator--verifier weight co-evolution to escape \citep{pan2026coverrl}; under frozen weights
its frequency is instead measured here, and it essentially never occurs. On two benchmarks
engineered to \emph{admit} a sound veto ($10$ and $16$ tasks, Qwen2.5-Coder-1.5B) the model
emitted the triggering bug on only $0/10$ and $2/16$ tasks and \emph{never} as the consensus
majority, so a conformal sound veto correctly never fired ($\Delta{=}0$, false-refute rate
$0$). A consensus-majority trap occurred in $0$ of these $26$ engineered opportunities (an
exact one-sided $95\%$ upper bound of $\approx 11\%$ on its rate), so by
Lemma~\ref{prop:selector} any leakage-free selector's expected gain is bounded at
essentially zero on the regime tested. This is the mechanistic core: the measured regime
leaves little leakage-free selector headroom---the regime in which these operators could help
is not observed to occur on the cells tested.

Finally, the certificates are power-limited. The trap's near-emptiness pushes every honest method
toward a do-no-harm claim ($\bten{=}0$), and Proposition~\ref{prop:power} fixes what that can
buy: on the $\le32$-task benchmarks where several operators were decided, no distribution-free
certificate issues even at zero observed harm (Corollary~\ref{cor:underpowered}). A formal
certificate is issued only at $n\ge45$, and every smaller do-no-harm result is read as
suggestive---a bound that disciplines the present positives as much as the operators that are
rejected.

Popper's lesson here is not that refutation fails but that \emph{refutation needs
something refutable}. A severe test is only informative when a wrong-but-plausible
hypothesis is on the table; the capability scissors says a competent generator
rarely puts one there, and the near-empty trap says the case where it does and a
correct alternative coexists is vanishingly rare. The REFUTE benchmark's finding
that strong agents create valid counterexamples for $<9\%$ of incorrect solutions
\citep{sinha2025refute} is the same wall seen from the generation side. The
companion study found that \emph{naming} severe tests adds nothing beyond
scaffold \citep{iscan2026scaffold}; this study finds that \emph{executing} them as
a post-hoc operator adds nothing to accuracy either---and locates the reason in
the data rather than the rhetoric.

Table~\ref{tab:compare} places the program against the closest prior methods, and
two distinctions are load-bearing. First, almost no prior selection or repair
method states a \emph{matched-compute, leakage-free} do-no-harm guarantee: they
report mean pass-rate gains, often on larger models and against baselines that draw
fewer samples---exactly the ``a gain is one more sample'' confound that the matched-compute
do-no-harm test rules out---and the regime their selection signal needs,
the consensus trap, is measured to be near-empty for a frozen small model
(Lemma~\ref{prop:selector}). Second, the falsification-based competitors
operate on a \emph{different object}: POPPER \citep{huang2025popper} validates a
fixed natural-language hypothesis under strict Type-I control, SEVerA
\citep{banerjee2026severa} wraps each model call in a verified contract, and REFUTE
\citep{sinha2025refute} measures whether a model can \emph{generate} a
counterexample, whereas falsification is tested here as a post-hoc operator over a frozen
model's own code, and why it does not convert to accuracy is measured. Where prior
methods have the edge is external validity and scale---they are demonstrated on
stronger models, where coverage is not the binding wall; the distinctive
contribution here is the opposite end, a calibrated account of the
small-frozen regime with the only deployed accuracy win on it.

\begin{table}[tbp]
\centering
\caption{This work among the closest post-hoc and falsification-based methods.
``DNH/cert.'' marks whether the method states a matched-compute, leakage-free
do-no-harm guarantee and a finite-sample certificate. Prior selection and repair
methods report mean gains, typically on larger models and against unmatched
baselines; the falsification competitors act on a different object than a frozen
model's own code samples.}
\label{tab:compare}
\footnotesize
\begin{tabular}{@{}L{0.235\linewidth}L{0.17\linewidth}L{0.10\linewidth}L{0.37\linewidth}@{}}
\toprule
Method & Acts on & DNH/cert.\ & Outcome and relation to this work \\
\midrule
Execution selection \citep{shi2022mbrexec,chen2022codet} & samples, by execution
  agreement & no / no & gains on capable models; null here, since the signal is in
  $\Thid$ and the consensus trap is near-empty (Lemma~\ref{prop:selector}) \\
Self-repair \citep{chen2024selfdebug,shinn2023reflexion} & a failed attempt &
  no / no & helps strong models; compute-confounded with no matched-compute gain at
  $\le1.5$\,B \citep{huang2024cannot} \\
Online allocation \citep{damani2024hard} & sampling budget & no / no & can fall
  \emph{below} best-of-$k$ on code; reproduced here \\
POPPER \citep{huang2025popper} & a fixed NL hypothesis & Type-I / -- & severe tests
  with error control; validates a hypothesis, not code synthesis \\
SEVerA \citep{banerjee2026severa} & each model call & contract / -- & verification,
  not falsification; a complementary guarantee \\
REFUTE \citep{sinha2025refute} & counterexample \emph{generation} & -- / -- &
  strong agents succeed on $<9\%$; the same wall from the generation side \\
\textbf{This work} (M1, ACE) & a frozen model's own samples & \textbf{yes / bound} &
  only deployed accuracy win on frozen small models ($+12$/$+33$, $\bten{=}0$) and a
  do-no-harm $\approx19\%$ compute saving ($n\ge45$); bounded to the small-frozen regime \\
\bottomrule
\end{tabular}
\end{table}

The two survivors point the practitioner somewhere specific. The largest cheap
win was not a clever selector but a better \emph{extractor}: M1's expression-layer
recovery delivered the program's only accuracy gain, on both benchmarks, at zero
extra compute, by fixing the harness rather than the model. The second win was a
\emph{compute} schedule (ACE), modest and with a Hoeffding--Bentkus bound at the
selected operating point. The unifying rule: spend
effort on the harness (extraction, serving, compute budget), measure coverage
($\passat{k}$) before crediting any selector, and treat post-hoc accuracy claims
with the coverage wall in mind. This is a modest conclusion, but it is the
one the measurements support.

In preparing this paper, a number-reconciliation pass caught an internal overclaim:
an earlier results note bundled ACE's aggressive $\tau{=}1$ saving ($\approx64\%$)
with the $\tau{=}4$ do-no-harm property ($\bten{=}0$, Hoeffding--Bentkus (HB) $p{=}0.0165$)
as if they held at one operating point. The manifest-exact trade-off is in
Table~\ref{tab:ace}: the zero-harm saving is $\approx19\%$. The corrected numbers
are reported here and the source notes have been fixed. This is flagged deliberately,
in the spirit of the program: a falsification discipline that does not audit its
own positives is only half a discipline.

Several boundaries of the claim deserve naming, each with its direction of risk. On scale and
models, the external replication spans two benchmarks (HumanEval+ and MBPP+), each on three
model cells (DeepSeek-Coder-1.3B and Qwen2.5-Coder-1.5B/0.5B), so the model axis covers the
$0.5$--$1.5$B range across two families; a genuinely third \emph{architecture} family would
strengthen it further, but the obvious local candidate, CodeGemma-2B, is base
(non-instruction-tuned) and does not follow the instruction protocol (it emits a bare code
fence and no solution), so it was not added, to avoid confounding the comparison. A third
external benchmark also remains worthwhile: a standalone LiveCodeBench functional floor probe
shows its easy functional subset is \emph{tractable} for these models (Qwen-1.5B solves
$37.5\%$ pass@1 on the public tests), so it is not vacuous, but a faithful integration is
deferred because LiveCodeBench ships no canonical solution and uses a class-method,
compressed-private-test contract that would require invasive changes to the validated grader.
The MBPP+ retained set excludes $8/378$ tasks whose canonical oracle is order-nondeterministic
($3$) or non-serializable ($5$), so its absolute counts are over $370$ tasks, not the nominal
$378$, and several per-operator verdicts rest on $\le 32$--$80$-task benchmarks where, by
Proposition~\ref{prop:power}, a do-no-harm certificate cannot issue; these are marked as
suggestive throughout.

On benchmark fidelity, MBPP+ passes the adapter's name-gate, and the determinism gate drops the
order-nondeterministic set-derived tasks rather than deferring them; only $3/164$ HumanEval+
tasks use floating-point tolerance, and the hidden-test input cap is a runtime bound, not a
correctness one (the discrimination-gap check still hit fraction $1.000$ at the cap). On
operator coverage, twenty-six operators is a survey, not a census; in particular,
logit-/entropy-based uncertainty channels and a training-based generator change (LoRA /
soft-prompt) were not exhaustively tested, though the generation-conditioning nulls (schema,
exemplar, thinking) and the coverage wall make a large win there unlikely. The evaluation is
oracle-only: all scoring is by hidden unit tests, and no claim is made about code quality,
abstention usefulness, or solutions that pass by hard-coding beyond what the augmented tests
catch. Finally, Ollama outputs are seed-stamped but not byte-identical across
hardware/driver/version, so aggregate rates are robust, but individual task-level discordances
in a McNemar table could shift by one or two on a different machine.

\section{Conclusion}
\label{sec:conclusion}

This study set out to make a frozen small code model write more correct code by
wrapping it in a Popperian post-hoc operator; the search was deliberately broad---twenty-six
operators, claims fixed in advance, adversarial confound checks, matched compute. For accuracy,
none of the operators that act on the model's \emph{semantics} beats Best-of-$N$ (BoN),
and the reason is measurable: the weak model's hard failures are systematic (the
coverage wall), a competent generator leaves no discriminable error among its
plausible candidates (the capability scissors), and the regime a leakage-free
selector needs is essentially empty (the near-empty consensus trap), which in
turn caps what any small-sample do-no-harm certificate can promise.

Two operators do win, and on a different axis: an expression-layer recovery (M1) that gives the
program its only deployed accuracy gain by fixing the extractor rather than the
model---do-no-harm, leakage-free, and \emph{stronger} on standard external
benchmarks ($+12$ tasks $p{=}2.4\times10^{-4}$ on HumanEval+ and $+33$
$p{=}1.2\times10^{-10}$ on MBPP+, DeepSeek-Coder-1.3B)---and a modest compute reduction
(ACE) with a Hoeffding--Bentkus bound at the selected operating point. The M1 accuracy
gain and the central selection negative both replicate on HumanEval+ and MBPP+ across
three model cells, closing the self-authored-benchmark gap. The practical message is plain and useful
to anyone deploying these models locally: fix the harness before the model, and
interpret post-hoc accuracy claims in light of the coverage wall.

These conclusions are bounded, and the boundaries point to the work that remains. The
mechanisms themselves predict where the negative would change: a genuinely
coverage-increasing intervention---a stronger generator, or retrieval that adds new
correct candidates rather than re-ranking the existing pool---would move the coverage wall,
and a weaker visible-test filter would refill the consensus trap, so both deserve a direct
test. On the measurement side, a third architecture family, a third external benchmark that
ships canonical solutions, uncertainty channels (logit/entropy), and a training-based
generator change (low-rank adaptation, LoRA, or soft-prompt) would extend the survey beyond the twenty-six
post-hoc operators examined here. Each targets a named limitation of the present study
rather than a new claim about it.

The calibrated takeaway, then, is not a universal impossibility result. For frozen small
code models in the tested regime, the dominant addressable losses were not semantic post-hoc
reasoning over existing samples, but expression recovery and compute scheduling; accuracy
gains from semantic operators would likely require changing the generator, expanding
candidate coverage, or adding genuinely new information rather than re-processing the pool
already in hand.

\section*{Generative-AI Use Disclosure}
\addcontentsline{toc}{section}{Generative-AI Use Disclosure}

This study \emph{evaluates} AI code generators (the frozen small models under test).
The study's design, hypotheses, pre-registration, literature review, statistical analysis,
and claim calibration are the author's own work. A generative-AI assistant was used only to
improve the readability of the manuscript prose (copy-editing and phrasing); it did not
design the experiments, produce or select the results, choose the citations, or set any
conclusion. No AI system is listed as an author.

\section*{Funding}
\addcontentsline{toc}{section}{Funding}

This work was supported by the Scientific and Technological Research Council of
T\"urkiye (T\"UB\.ITAK) under the 1001 programme, project no.~225M316,
``A Tilt-Trirotor Vertical Take-Off and Landing Controller Enabling Task-Oriented
Transfer and Rapid Adaptation: Hardware-in-the-Loop and Real Validation of a
Meta-Learning--Based Reinforcement Learning Architecture.'' The meta-learning code
developed in that project was integrated into the test harness together with the
frozen language models studied here, and was used during the agent-design phase to
optimize the algorithms designed by the author.

\section*{Conflict of Interest}
\addcontentsline{toc}{section}{Conflict of Interest}

The author declares no conflict of interest.

\appendix

\section{The operator ledger}
\label{app:ledger}

Table~\ref{tab:ledger} lists every post-hoc operator in the program with its
verdict and headline number against Best-of-$N$ at matched compute. Legend:
\textbf{POS} positive; \textbf{WEAK} signal, not significant; \textbf{NULL}
$\approx$ BoN; \textbf{NEG} worse than BoN; \textbf{FALS} looked positive, broke
under a confound check. Numbers are from the released per-operator manifests; cells
are Qwen2.5-Coder-1.5B/0.5B and DeepSeek-Coder-1.3B unless noted.

\begin{table}[tbp]
\centering
\caption{The post-hoc operators for frozen small code models, by family. Rows \#0--\#25 are
the twenty-six operators of the accuracy survey (all null, negative, falsified, or weak);
S1--S2 are the two survivors (M1, ACE), counted apart because they win on the expression and
compute axes the twenty-six do not address.}
\label{tab:ledger}
\footnotesize
\begin{tabular}{@{}L{0.025\linewidth}L{0.235\linewidth}L{0.12\linewidth}cL{0.38\linewidth}@{}}
\toprule
\# & Operator & Family & Verdict & Headline result vs.\ BoN \\
\midrule
0 & Refute-first pipeline (REFv3) & prompt & WEAK & $60/80$ vs $55/80$, $+6.2$pp, McNemar $p{=}0.23$ (mostly CoT edge; Popperian increment $+1.2$pp) \\
\midrule
1 & Execution-consensus (MBR-Exec) & selection & NULL & ties BoN ($k{=}8$); HumanEval+ qwen $127$ vs $126$, $p{=}1.0$ \\
2 & Embedding-medoid (CEMS) & selection & NULL & $40/50$ vs $39/50$, $p{=}1.00$ \\
3 & Behavioural-trace rerank (BTR) & selection & NULL & $40/50$, ceiling-bound \\
4 & Verisimilitude rank (VRS) & selection & NULL & $39/50$, ceiling-bound \\
5 & Metamorphic selection (MR) & selection & NULL & $+0$ both cells; $\sim$83\% of bugs MR-invisible \\
\midrule
6 & Learned verifier (LCV) & verification & NULL & ROC-AUC $0.444$ (below chance) \\
7 & Fingerprint verifier (XFV) & verification & WEAK & AUC $0.60$, permutation $p{=}0.114$ (ns) \\
8 & Latent abstention (LEA) & verification & NEG & AUC $0.738$ but $24\%$ false-abstain $\to$ net-negative \\
\midrule
9 & Self-debug (M2) & repair & NULL & compute-confounded; $\bten{=}13$ vs matched BoN@8 \\
10 & Bandit repair router (BRR) & repair & NULL & LinUCB $0.379<$ best fixed arm $0.482$ \\
11 & Counterexample-guided repair (CEGIS-R) & repair & NEG & $14/20 < $ REFv3 $16/20$; matched-compute retracted \\
12 & Severity-conditioned regen (SERA) & repair & NEG & severe $0.675=$ placebo $0.675$ ($\Delta{=}0$) \\
13 & Prompt-normalization (M3) & repair & WEAK & $+3$ at $1$-sample, $\ll$ BoN@8 \\
14 & Thinking-iteration (TIRF) & repair & NEG & $\Delta{=}{-}0.015$ (qwen3-0.6b) / ${-}0.31$ (1.7b), $\bten{=}2$ \\
\midrule
15 & Version-space elimination (PAVER) & elim./veto & FALS & $\Delta{=}{+}0.0625$ but $p{=}0.5$, FRR $0.28$, deepseek $\Delta{=}0$ \\
16 & Programme portfolio (PLURAL) & portfolio & FALS & $+0.19$ $=$ selection artifact (BoN $\passat{20}$ cov $0.906\approx$ oracle) \\
17 & Conformal sound veto (SCRC) & veto & NULL & $\Delta{=}0$ on 3 substrates; veto never fires \\
\midrule
18 & Schema hint (PRAXIS) & conditioning & NEG & matched $0.781<$ vanilla $0.844$ ($\Delta{=}{-}0.0625$) \\
19 & Verified-exemplar (MNEMON) & conditioning & FALS & real $+5$ but generic scaffold (verified $\approx$ unverified $\approx$ placebo) \\
20 & Decoding-diversity union (POLYGEN) & conditioning & NULL & coverage $+2/50$ but deployed $\Delta{=}0$ ($p{=}1.0$) \\
21 & Wavelet probe (WAVE-RL-F) & conditioning & NULL & hidden-AUC $0.479$, $\Delta$AUC ${-}0.004$ ($p{=}0.60$) \\
22 & Frequency probe (FREQ-RL) & conditioning & NULL & hidden-AUC $0.354<$ chance (gate STOP) \\
23 & Exec-behavior differential (FREQ-RL\,v2) & conditioning & NULL$^\dagger$ & breaches wall (AUC $0.798$) but deployed $\Delta{=}0$ \\
24 & Adaptive allocation (SCARF) & compute & NULL & $=$ compute-saving restated; falsification not load-bearing \\
25 & Easy$\to$hard reallocation (ACE+) & compute & NULL & $\Delta{=}{+}0.0038$, signal inert (reproduces \citealt{damani2024hard}) \\
\midrule
\multicolumn{5}{@{}l}{\textbf{Survivors (orthogonal axes):}}\\
S1 & \textbf{Expression recovery (M1)} & expression & \textbf{POS} & HumanEval+ deepseek $+12$ ($p{=}2.4{\times}10^{-4}$, $\bten{=}0$) \\
S2 & \textbf{Consensus early-stop (ACE)} & compute & \textbf{POS}$^\ddagger$ & do-no-harm $\approx19\%$ saving (HB bound, $\tau$ on eval set; $p{=}0.0165$, $n{=}80$) \\
\bottomrule
\end{tabular}

\smallskip
{\footnotesize $^\dagger$ The execution-behavior differential is the one leakage-free
signal that \emph{detects} hidden failure above chance, but it has nothing to act on
at deploy time (the consensus pick already agrees), so it converts to $\Delta{=}0$.
$^\ddagger$ ACE is a \emph{compute} win, not an accuracy win, and the
do-no-harm saving is modest (in the Results).}
\end{table}

\section{A note on finite-sample power}
\label{app:power}

Proposition~\ref{prop:power} gives the distribution-free floor: at $\alpha{=}0.05,
\delta{=}0.10$, no zero-harm do-no-harm certificate can issue below $n{=}45$. The
per-operator benchmarks span $n{=}10$ (per family) to $n{=}80$
(FALSIFY-BENCH-local) to $n{=}164$ (HumanEval+). Two consequences follow and are
applied throughout. First, every operator evaluated only on a $\le 32$-task set (the
sound-veto capstone and metamorphic selection on Substrate-W) is reported as
\emph{suggestive}: even at zero observed harm the certificate does not issue, so a
``provable do-no-harm'' phrasing there would over-claim. Second, a gain at
$\bten{=}0$ is judged by the exact one-sided McNemar/binomial, which reaches
$p<0.05$ at $\bzo\ge 5$; the pre-declared effect floor of a net $\ge 6$ (the two-sided
exact threshold) is stricter still. The only operators that clear one-sided significance are
M1's accuracy wins (FALSIFY-BENCH-local qwen $\bzo{=}7$; HumanEval+ deepseek $\bzo{=}12$;
MBPP+ deepseek $\bzo{=}33$); FALSIFY-BENCH-local deepseek ($\bzo{=}5$, $p{=}0.031$) clears
one-sided significance but not the conservative net-$\ge6$ floor, and no semantic operator
reaches even $\bzo{=}5$. No accuracy claim is therefore made for any semantic operator, and
report all such results as null/negative with their exact discordant counts rather than as
ceiling-limited non-detections. The two positives survive a family-wise correction across
the full operator set: M1's external accuracy wins clear a Bonferroni bar over all
twenty-six operators (HumanEval+ deepseek $p{=}2.4\times10^{-4}$ and MBPP+ deepseek
$p{=}1.2\times10^{-10}$, both far below $0.05/26\approx0.0019$), and no semantic operator
approaches significance even uncorrected. ACE's compute result is a single-operating-point
Hoeffding--Bentkus bound \citep{angelopoulos2021ltt}, read at the $\tau$ selected on the
evaluation set with no separate $\tau$-grid multiplicity correction (see the Results).

\bibliography{references}

\begin{thebibliography}{}

\bibitem [\protect \citeauthoryear {%
Angelopoulos%
, Bates%
, Cand\`{e}s%
, Jordan%
\BCBL {}\ \BBA {} Lei%
}{%
Angelopoulos%
\ \protect \BOthers {.}}{%
{\protect \APACyear {2021}}%
}]{%
angelopoulos2021ltt}
\APACinsertmetastar {%
angelopoulos2021ltt}%
\begin{APACrefauthors}%
Angelopoulos, A\BPBI N.%
, Bates, S.%
, Cand\`{e}s, E\BPBI J.%
, Jordan, M\BPBI I.%
\BCBL {}\ \BBA {} Lei, L.%
\end{APACrefauthors}%
\unskip\
\newblock
\APACrefYearMonthDay{2021}{}{}.
\newblock
\APACrefbtitle {Learn then Test: Calibrating Predictive Algorithms to Achieve Risk Control.} {Learn then test: Calibrating predictive algorithms to achieve risk control.}
\newblock
\APAChowpublished {arXiv preprint arXiv:2110.01052}.
\newblock
\begin{APACrefURL} \url{https://arxiv.org/abs/2110.01052} \end{APACrefURL}
\PrintBackRefs{\CurrentBib}

\bibitem [\protect \citeauthoryear {%
Austin%
\ \protect \BOthers {.}}{%
Austin%
\ \protect \BOthers {.}}{%
{\protect \APACyear {2021}}%
}]{%
austin2021mbpp}
\APACinsertmetastar {%
austin2021mbpp}%
\begin{APACrefauthors}%
Austin, J.%
, Odena, A.%
, Nye, M.%
, Bosma, M.%
, Michalewski, H.%
, Dohan, D.%
, Jiang, E.%
, Cai, C.%
, Terry, M.%
, Le, Q.%
\BCBL {}\ \BBA {} Sutton, C.%
\end{APACrefauthors}%
\unskip\
\newblock
\APACrefYearMonthDay{2021}{}{}.
\newblock
\APACrefbtitle {Program Synthesis with Large Language Models.} {Program synthesis with large language models.}
\newblock
\APAChowpublished {arXiv preprint arXiv:2108.07732}.
\newblock
\begin{APACrefURL} \url{https://arxiv.org/abs/2108.07732} \end{APACrefURL}
\PrintBackRefs{\CurrentBib}

\bibitem [\protect \citeauthoryear {%
Banerjee%
\ \protect \BOthers {.}}{%
Banerjee%
\ \protect \BOthers {.}}{%
{\protect \APACyear {2026}}%
}]{%
banerjee2026severa}
\APACinsertmetastar {%
banerjee2026severa}%
\begin{APACrefauthors}%
Banerjee, D.%
, Xu, C.%
, Ie, E.%
, Zhang, M.%
, Peng, D.%
, Lin, C\BHBI C.%
\BCBL {}\ \BBA {} Singh, G.%
\end{APACrefauthors}%
\unskip\
\newblock
\APACrefYearMonthDay{2026}{}{}.
\newblock
\APACrefbtitle {{SEVerA}: Verified Synthesis of Self-Evolving Agents.} {{SEVerA}: Verified synthesis of self-evolving agents.}
\newblock
\APAChowpublished {arXiv preprint arXiv:2603.25111}.
\newblock
\begin{APACrefURL} \url{https://arxiv.org/abs/2603.25111} \end{APACrefURL}
\PrintBackRefs{\CurrentBib}

\bibitem [\protect \citeauthoryear {%
Bansal%
\ \protect \BOthers {.}}{%
Bansal%
\ \protect \BOthers {.}}{%
{\protect \APACyear {2026}}%
}]{%
bansal2026vibepass}
\APACinsertmetastar {%
bansal2026vibepass}%
\begin{APACrefauthors}%
Bansal, S.%
, Jiao, F.%
, Zhou, Y.%
, Xu, A.%
, Joty, S.%
\BCBL {}\ \BBA {} Yavuz, S.%
\end{APACrefauthors}%
\unskip\
\newblock
\APACrefYearMonthDay{2026}{}{}.
\newblock
\APACrefbtitle {{VIBEPASS}: Can Vibe Coders Really Pass the Vibe Check?} {{VIBEPASS}: Can vibe coders really pass the vibe check?}
\newblock
\APAChowpublished {arXiv preprint arXiv:2603.15921}.
\newblock
\begin{APACrefURL} \url{https://arxiv.org/abs/2603.15921} \end{APACrefURL}
\PrintBackRefs{\CurrentBib}

\bibitem [\protect \citeauthoryear {%
Brown%
\ \protect \BOthers {.}}{%
Brown%
\ \protect \BOthers {.}}{%
{\protect \APACyear {2024}}%
}]{%
brown2024monkeys}
\APACinsertmetastar {%
brown2024monkeys}%
\begin{APACrefauthors}%
Brown, B.%
, Juravsky, J.%
, Ehrlich, R.%
, Clark, R.%
, Le, Q\BPBI V.%
, R\'{e}, C.%
\BCBL {}\ \BBA {} Mirhoseini, A.%
\end{APACrefauthors}%
\unskip\
\newblock
\APACrefYearMonthDay{2024}{}{}.
\newblock
\APACrefbtitle {Large Language Monkeys: Scaling Inference Compute with Repeated Sampling.} {Large language monkeys: Scaling inference compute with repeated sampling.}
\newblock
\APAChowpublished {arXiv preprint arXiv:2407.21787}.
\newblock
\begin{APACrefURL} \url{https://arxiv.org/abs/2407.21787} \end{APACrefURL}
\PrintBackRefs{\CurrentBib}

\bibitem [\protect \citeauthoryear {%
B.~Chen%
\ \protect \BOthers {.}}{%
B.~Chen%
\ \protect \BOthers {.}}{%
{\protect \APACyear {2023}}%
}]{%
chen2022codet}
\APACinsertmetastar {%
chen2022codet}%
\begin{APACrefauthors}%
Chen, B.%
, Zhang, F.%
, Nguyen, A.%
, Zan, D.%
, Lin, Z.%
, Lou, J\BHBI G.%
\BCBL {}\ \BBA {} Chen, W.%
\end{APACrefauthors}%
\unskip\
\newblock
\APACrefYearMonthDay{2023}{}{}.
\newblock
{\BBOQ}\APACrefatitle {{CodeT}: Code Generation with Generated Tests} {{CodeT}: Code generation with generated tests}.{\BBCQ}
\newblock
\BIn{} \APACrefbtitle {International Conference on Learning Representations (ICLR).} {International conference on learning representations (iclr).}
\newblock
\begin{APACrefURL} \url{https://arxiv.org/abs/2207.10397} \end{APACrefURL}
\newblock
\APACrefnote{arXiv:2207.10397}
\PrintBackRefs{\CurrentBib}

\bibitem [\protect \citeauthoryear {%
M.~Chen%
\ \protect \BOthers {.}}{%
M.~Chen%
\ \protect \BOthers {.}}{%
{\protect \APACyear {2021}}%
}]{%
chen2021humaneval}
\APACinsertmetastar {%
chen2021humaneval}%
\begin{APACrefauthors}%
Chen, M.%
, Tworek, J.%
, Jun, H.%
, Yuan, Q.%
, de Oliveira~Pinto, H\BPBI P.%
, Kaplan, J.%
, Edwards, H.%
, Burda, Y.%
, Joseph, N.%
, Brockman, G.%
, Ray, A.%
, Puri, R.%
, Krueger, G.%
, Petrov, M.%
, Khlaaf, H.%
, Sastry, G.%
, Mishkin, P.%
, Chan, B.%
, Gray, S.%
, Ryder, N.%
, Pavlov, M.%
, Power, A.%
, Kaiser, L.%
, Bavarian, M.%
, Winter, C.%
, Tillet, P.%
, Such, F\BPBI P.%
, Cummings, D.%
, Plappert, M.%
, Chantzis, F.%
, Barnes, E.%
, Herbert-Voss, A.%
, Guss, W\BPBI H.%
, Nichol, A.%
, Paino, A.%
, Tezak, N.%
, Tang, J.%
, Babuschkin, I.%
, Balaji, S.%
, Jain, S.%
, Saunders, W.%
, Hesse, C.%
, Carr, A\BPBI N.%
, Leike, J.%
, Achiam, J.%
, Misra, V.%
, Morikawa, E.%
, Radford, A.%
, Knight, M.%
, Brundage, M.%
, Murati, M.%
, Mayer, K.%
, Welinder, P.%
, McGrew, B.%
, Amodei, D.%
, McCandlish, S.%
, Sutskever, I.%
\BCBL {}\ \BBA {} Zaremba, W.%
\end{APACrefauthors}%
\unskip\
\newblock
\APACrefYearMonthDay{2021}{}{}.
\newblock
\APACrefbtitle {Evaluating Large Language Models Trained on Code.} {Evaluating large language models trained on code.}
\newblock
\APAChowpublished {arXiv preprint arXiv:2107.03374}.
\newblock
\begin{APACrefURL} \url{https://arxiv.org/abs/2107.03374} \end{APACrefURL}
\PrintBackRefs{\CurrentBib}

\bibitem [\protect \citeauthoryear {%
X.~Chen%
, Lin%
, Sch\"{a}rli%
\BCBL {}\ \BBA {} Zhou%
}{%
X.~Chen%
\ \protect \BOthers {.}}{%
{\protect \APACyear {2024}}%
}]{%
chen2024selfdebug}
\APACinsertmetastar {%
chen2024selfdebug}%
\begin{APACrefauthors}%
Chen, X.%
, Lin, M.%
, Sch\"{a}rli, N.%
\BCBL {}\ \BBA {} Zhou, D.%
\end{APACrefauthors}%
\unskip\
\newblock
\APACrefYearMonthDay{2024}{}{}.
\newblock
{\BBOQ}\APACrefatitle {Teaching Large Language Models to Self-Debug} {Teaching large language models to self-debug}.{\BBCQ}
\newblock
\BIn{} \APACrefbtitle {International Conference on Learning Representations (ICLR).} {International conference on learning representations (iclr).}
\newblock
\begin{APACrefURL} \url{https://arxiv.org/abs/2304.05128} \end{APACrefURL}
\newblock
\APACrefnote{arXiv:2304.05128}
\PrintBackRefs{\CurrentBib}

\bibitem [\protect \citeauthoryear {%
X.~Chen%
\ \protect \BOthers {.}}{%
X.~Chen%
\ \protect \BOthers {.}}{%
{\protect \APACyear {2025}}%
}]{%
chen2025revisitdebug}
\APACinsertmetastar {%
chen2025revisitdebug}%
\begin{APACrefauthors}%
Chen, X.%
, Tao, Z.%
, Zhang, K.%
, Zhou, C.%
, Zhang, X.%
, Gu, W.%
, He, Y.%
, Zhang, M.%
, Cai, X.%
, Zhao, H.%
\BCBL {}\ \BBA {} Jin, Z.%
\end{APACrefauthors}%
\unskip\
\newblock
\APACrefYearMonthDay{2025}{}{}.
\newblock
{\BBOQ}\APACrefatitle {Revisit Self-Debugging with Self-Generated Tests for Code Generation} {Revisit self-debugging with self-generated tests for code generation}.{\BBCQ}
\newblock
\BIn{} \APACrefbtitle {Proceedings of the 63rd Annual Meeting of the Association for Computational Linguistics (Volume 1: Long Papers)} {Proceedings of the 63rd annual meeting of the association for computational linguistics (volume 1: Long papers)}\ (\BPGS\ 18003--18023).
\newblock
\begin{APACrefURL} \url{https://doi.org/10.18653/v1/2025.acl-long.881} \end{APACrefURL}
\newblock
\APACrefnote{arXiv:2501.12793}
\PrintBackRefs{\CurrentBib}

\bibitem [\protect \citeauthoryear {%
Chou%
, Lwin%
\BCBL {}\ \BBA {} Soremekun%
}{%
Chou%
\ \protect \BOthers {.}}{%
{\protect \APACyear {2026}}%
}]{%
soremekun2026mucoco}
\APACinsertmetastar {%
soremekun2026mucoco}%
\begin{APACrefauthors}%
Chou, C\BPBI J.%
, Lwin, K\BPBI T.%
\BCBL {}\ \BBA {} Soremekun, E.%
\end{APACrefauthors}%
\unskip\
\newblock
\APACrefYearMonthDay{2026}{}{}.
\newblock
\APACrefbtitle {{MUCOCO}: Automated Consistency Testing of Code {LLMs}.} {{MUCOCO}: Automated consistency testing of code {LLMs}.}
\newblock
\APAChowpublished {arXiv preprint arXiv:2604.19086}.
\newblock
\begin{APACrefURL} \url{https://arxiv.org/abs/2604.19086} \end{APACrefURL}
\PrintBackRefs{\CurrentBib}

\bibitem [\protect \citeauthoryear {%
Damani%
, Shenfeld%
, Peng%
, Bobu%
\BCBL {}\ \BBA {} Andreas%
}{%
Damani%
\ \protect \BOthers {.}}{%
{\protect \APACyear {2025}}%
}]{%
damani2024hard}
\APACinsertmetastar {%
damani2024hard}%
\begin{APACrefauthors}%
Damani, M.%
, Shenfeld, I.%
, Peng, A.%
, Bobu, A.%
\BCBL {}\ \BBA {} Andreas, J.%
\end{APACrefauthors}%
\unskip\
\newblock
\APACrefYearMonthDay{2025}{}{}.
\newblock
{\BBOQ}\APACrefatitle {Learning How Hard to Think: Input-Adaptive Allocation of {LM} Computation} {Learning how hard to think: Input-adaptive allocation of {LM} computation}.{\BBCQ}
\newblock
\BIn{} \APACrefbtitle {International Conference on Learning Representations (ICLR).} {International conference on learning representations (iclr).}
\newblock
\begin{APACrefURL} \url{https://arxiv.org/abs/2410.04707} \end{APACrefURL}
\newblock
\APACrefnote{arXiv:2410.04707}
\PrintBackRefs{\CurrentBib}

\bibitem [\protect \citeauthoryear {%
Fagerland%
, Lydersen%
\BCBL {}\ \BBA {} Laake%
}{%
Fagerland%
\ \protect \BOthers {.}}{%
{\protect \APACyear {2013}}%
}]{%
fagerland2013mcnemar}
\APACinsertmetastar {%
fagerland2013mcnemar}%
\begin{APACrefauthors}%
Fagerland, M\BPBI W.%
, Lydersen, S.%
\BCBL {}\ \BBA {} Laake, P.%
\end{APACrefauthors}%
\unskip\
\newblock
\APACrefYearMonthDay{2013}{}{}.
\newblock
{\BBOQ}\APACrefatitle {The {McNemar} Test for Binary Matched-Pairs Data: Mid-\textit{p} and Asymptotic Are Better than Exact Conditional} {The {McNemar} test for binary matched-pairs data: Mid-\textit{p} and asymptotic are better than exact conditional}.{\BBCQ}
\newblock
\APACjournalVolNumPages{BMC Medical Research Methodology}{13}{1}{91}.
\newblock
\begin{APACrefURL} \url{https://doi.org/10.1186/1471-2288-13-91} \end{APACrefURL}
\PrintBackRefs{\CurrentBib}

\bibitem [\protect \citeauthoryear {%
Gou%
\ \protect \BOthers {.}}{%
Gou%
\ \protect \BOthers {.}}{%
{\protect \APACyear {2024}}%
}]{%
gou2024critic}
\APACinsertmetastar {%
gou2024critic}%
\begin{APACrefauthors}%
Gou, Z.%
, Shao, Z.%
, Gong, Y.%
, Shen, Y.%
, Yang, Y.%
, Duan, N.%
\BCBL {}\ \BBA {} Chen, W.%
\end{APACrefauthors}%
\unskip\
\newblock
\APACrefYearMonthDay{2024}{}{}.
\newblock
{\BBOQ}\APACrefatitle {{CRITIC}: Large Language Models Can Self-Correct with Tool-Interactive Critiquing} {{CRITIC}: Large language models can self-correct with tool-interactive critiquing}.{\BBCQ}
\newblock
\BIn{} \APACrefbtitle {International Conference on Learning Representations (ICLR).} {International conference on learning representations (iclr).}
\newblock
\begin{APACrefURL} \url{https://arxiv.org/abs/2305.11738} \end{APACrefURL}
\newblock
\APACrefnote{arXiv:2305.11738}
\PrintBackRefs{\CurrentBib}

\bibitem [\protect \citeauthoryear {%
Groce%
, Ahmed%
, Jensen%
\BCBL {}\ \BBA {} McKenney%
}{%
Groce%
\ \protect \BOthers {.}}{%
{\protect \APACyear {2015}}%
}]{%
groce2015verified}
\APACinsertmetastar {%
groce2015verified}%
\begin{APACrefauthors}%
Groce, A.%
, Ahmed, I.%
, Jensen, C.%
\BCBL {}\ \BBA {} McKenney, P\BPBI E.%
\end{APACrefauthors}%
\unskip\
\newblock
\APACrefYearMonthDay{2015}{}{}.
\newblock
{\BBOQ}\APACrefatitle {How Verified is My Code? {F}alsification-Driven Verification} {How verified is my code? {F}alsification-driven verification}.{\BBCQ}
\newblock
\BIn{} \APACrefbtitle {Proceedings of the 2015 30th IEEE/ACM International Conference on Automated Software Engineering (ASE)} {Proceedings of the 2015 30th ieee/acm international conference on automated software engineering (ase)}\ (\BPGS\ 737--748).
\newblock
\begin{APACrefURL} \url{https://doi.org/10.1109/ASE.2015.40} \end{APACrefURL}
\PrintBackRefs{\CurrentBib}

\bibitem [\protect \citeauthoryear {%
Guo%
\ \protect \BOthers {.}}{%
Guo%
\ \protect \BOthers {.}}{%
{\protect \APACyear {2024}}%
}]{%
guo2024deepseekcoder}
\APACinsertmetastar {%
guo2024deepseekcoder}%
\begin{APACrefauthors}%
Guo, D.%
, Zhu, Q.%
, Yang, D.%
, Xie, Z.%
, Dong, K.%
, Zhang, W.%
, Chen, G.%
, Bi, X.%
, Wu, Y.%
, Li, Y.%
, Luo, F.%
, Xiong, Y.%
\BCBL {}\ \BBA {} Liang, W.%
\end{APACrefauthors}%
\unskip\
\newblock
\APACrefYearMonthDay{2024}{}{}.
\newblock
\APACrefbtitle {{DeepSeek-Coder}: When the Large Language Model Meets Programming -- The Rise of Code Intelligence.} {{DeepSeek-Coder}: When the large language model meets programming -- the rise of code intelligence.}
\newblock
\APAChowpublished {arXiv preprint arXiv:2401.14196}.
\newblock
\begin{APACrefURL} \url{https://arxiv.org/abs/2401.14196} \end{APACrefURL}
\PrintBackRefs{\CurrentBib}

\bibitem [\protect \citeauthoryear {%
J.~Huang%
\ \protect \BOthers {.}}{%
J.~Huang%
\ \protect \BOthers {.}}{%
{\protect \APACyear {2024}}%
}]{%
huang2024cannot}
\APACinsertmetastar {%
huang2024cannot}%
\begin{APACrefauthors}%
Huang, J.%
, Chen, X.%
, Mishra, S.%
, Zheng, H\BPBI S.%
, Yu, A\BPBI W.%
, Song, X.%
\BCBL {}\ \BBA {} Zhou, D.%
\end{APACrefauthors}%
\unskip\
\newblock
\APACrefYearMonthDay{2024}{}{}.
\newblock
{\BBOQ}\APACrefatitle {Large Language Models Cannot Self-Correct Reasoning Yet} {Large language models cannot self-correct reasoning yet}.{\BBCQ}
\newblock
\BIn{} \APACrefbtitle {International Conference on Learning Representations (ICLR).} {International conference on learning representations (iclr).}
\newblock
\begin{APACrefURL} \url{https://arxiv.org/abs/2310.01798} \end{APACrefURL}
\newblock
\APACrefnote{arXiv:2310.01798}
\PrintBackRefs{\CurrentBib}

\bibitem [\protect \citeauthoryear {%
K.~Huang%
\ \protect \BOthers {.}}{%
K.~Huang%
\ \protect \BOthers {.}}{%
{\protect \APACyear {2025}}%
}]{%
huang2025popper}
\APACinsertmetastar {%
huang2025popper}%
\begin{APACrefauthors}%
Huang, K.%
, Jin, Y.%
, Li, R.%
, Li, M\BPBI Y.%
, Cand\`{e}s, E.%
\BCBL {}\ \BBA {} Leskovec, J.%
\end{APACrefauthors}%
\unskip\
\newblock
\APACrefYearMonthDay{2025}{}{}.
\newblock
{\BBOQ}\APACrefatitle {Automated Hypothesis Validation with Agentic Sequential Falsifications} {Automated hypothesis validation with agentic sequential falsifications}.{\BBCQ}
\newblock
\BIn{} \APACrefbtitle {Proceedings of the 42nd International Conference on Machine Learning (ICML).} {Proceedings of the 42nd international conference on machine learning (icml).}
\newblock
\begin{APACrefURL} \url{https://arxiv.org/abs/2502.09858} \end{APACrefURL}
\newblock
\APACrefnote{POPPER; arXiv:2502.09858}
\PrintBackRefs{\CurrentBib}

\bibitem [\protect \citeauthoryear {%
Hui%
\ \protect \BOthers {.}}{%
Hui%
\ \protect \BOthers {.}}{%
{\protect \APACyear {2024}}%
}]{%
hui2024qwen25coder}
\APACinsertmetastar {%
hui2024qwen25coder}%
\begin{APACrefauthors}%
Hui, B.%
, Yang, J.%
, Cui, Z.%
, Yang, J.%
, Liu, D.%
, Zhang, L.%
, Liu, T.%
, Zhang, J.%
, Yu, B.%
, Lu, K.%
, Dang, K.%
, Fan, Y.%
, Zhang, Y.%
, Yang, A.%
, Men, R.%
, Huang, F.%
, Zheng, B.%
, Miao, Y.%
, Quan, S.%
, Feng, Y.%
, Ren, X.%
, Ren, X.%
, Zhou, J.%
\BCBL {}\ \BBA {} Lin, J.%
\end{APACrefauthors}%
\unskip\
\newblock
\APACrefYearMonthDay{2024}{}{}.
\newblock
\APACrefbtitle {{Qwen2.5-Coder} Technical Report.} {{Qwen2.5-Coder} technical report.}
\newblock
\APAChowpublished {arXiv preprint arXiv:2409.12186}.
\newblock
\begin{APACrefURL} \url{https://arxiv.org/abs/2409.12186} \end{APACrefURL}
\PrintBackRefs{\CurrentBib}

\bibitem [\protect \citeauthoryear {%
\.{I}\c{s}can%
}{%
\.{I}\c{s}can%
}{%
{\protect \APACyear {2026}}%
}]{%
iscan2026scaffold}
\APACinsertmetastar {%
iscan2026scaffold}%
\begin{APACrefauthors}%
\.{I}\c{s}can, M.%
\end{APACrefauthors}%
\unskip\
\newblock
\APACrefYearMonthDay{2026}{}{}.
\newblock
\APACrefbtitle {Scaffold, Not Vocabulary? {A} Controlled, Two-Tier, Pre-Registered Study of a Popperian Code-Generation Skill.} {Scaffold, not vocabulary? {A} controlled, two-tier, pre-registered study of a popperian code-generation skill.}
\newblock
\APAChowpublished {arXiv preprint arXiv:2606.06454}.
\newblock
\begin{APACrefURL} \url{https://arxiv.org/abs/2606.06454} \end{APACrefURL}
\newblock
\APACrefnote{Companion preprint; PythaLab, Y\i{}ld\i{}z Technical University}
\PrintBackRefs{\CurrentBib}

\bibitem [\protect \citeauthoryear {%
Jin%
\ \BBA {} Chen%
}{%
Jin%
\ \BBA {} Chen%
}{%
{\protect \APACyear {2025}}%
}]{%
jin2025selfcritiquefail}
\APACinsertmetastar {%
jin2025selfcritiquefail}%
\begin{APACrefauthors}%
Jin, H.%
\BCBT {}\ \BBA {} Chen, H.%
\end{APACrefauthors}%
\unskip\
\newblock
\APACrefYearMonthDay{2025}{}{}.
\newblock
{\BBOQ}\APACrefatitle {Uncovering Systematic Failures of {LLMs} in Verifying Code Against Natural Language Specifications} {Uncovering systematic failures of {LLMs} in verifying code against natural language specifications}.{\BBCQ}
\newblock
\BIn{} \APACrefbtitle {2025 40th IEEE/ACM International Conference on Automated Software Engineering (ASE)} {2025 40th ieee/acm international conference on automated software engineering (ase)}\ (\BPGS\ 3819--3823).
\newblock
\begin{APACrefURL} \url{https://doi.org/10.1109/ASE63991.2025.00323} \end{APACrefURL}
\newblock
\APACrefnote{NIER track; arXiv:2508.12358}
\PrintBackRefs{\CurrentBib}

\bibitem [\protect \citeauthoryear {%
Kim%
, Garg%
, Peng%
\BCBL {}\ \BBA {} Garg%
}{%
Kim%
\ \protect \BOthers {.}}{%
{\protect \APACyear {2025}}%
}]{%
kim2025correlated}
\APACinsertmetastar {%
kim2025correlated}%
\begin{APACrefauthors}%
Kim, E.%
, Garg, A.%
, Peng, K.%
\BCBL {}\ \BBA {} Garg, N.%
\end{APACrefauthors}%
\unskip\
\newblock
\APACrefYearMonthDay{2025}{}{}.
\newblock
{\BBOQ}\APACrefatitle {Correlated Errors in Large Language Models} {Correlated errors in large language models}.{\BBCQ}
\newblock
\BIn{} \APACrefbtitle {Proceedings of the 42nd International Conference on Machine Learning (ICML).} {Proceedings of the 42nd international conference on machine learning (icml).}
\newblock
\begin{APACrefURL} \url{https://arxiv.org/abs/2506.07962} \end{APACrefURL}
\newblock
\APACrefnote{arXiv:2506.07962}
\PrintBackRefs{\CurrentBib}

\bibitem [\protect \citeauthoryear {%
Lakatos%
}{%
Lakatos%
}{%
{\protect \APACyear {1968}}%
}]{%
lakatos1968criticism}
\APACinsertmetastar {%
lakatos1968criticism}%
\begin{APACrefauthors}%
Lakatos, I.%
\end{APACrefauthors}%
\unskip\
\newblock
\APACrefYearMonthDay{1968}{}{}.
\newblock
{\BBOQ}\APACrefatitle {Criticism and the Methodology of Scientific Research Programmes} {Criticism and the methodology of scientific research programmes}.{\BBCQ}
\newblock
\APACjournalVolNumPages{Proceedings of the Aristotelian Society}{69}{}{149--186}.
\newblock
\begin{APACrefURL} \url{https://doi.org/10.1093/aristotelian/69.1.149} \end{APACrefURL}
\newblock
\APACrefnote{New Series; paper read at the Aristotelian Society, 28 October 1968}
\PrintBackRefs{\CurrentBib}

\bibitem [\protect \citeauthoryear {%
Lightman%
\ \protect \BOthers {.}}{%
Lightman%
\ \protect \BOthers {.}}{%
{\protect \APACyear {2024}}%
}]{%
lightman2024letsverify}
\APACinsertmetastar {%
lightman2024letsverify}%
\begin{APACrefauthors}%
Lightman, H.%
, Kosaraju, V.%
, Burda, Y.%
, Edwards, H.%
, Baker, B.%
, Lee, T.%
, Leike, J.%
, Schulman, J.%
, Sutskever, I.%
\BCBL {}\ \BBA {} Cobbe, K.%
\end{APACrefauthors}%
\unskip\
\newblock
\APACrefYearMonthDay{2024}{}{}.
\newblock
{\BBOQ}\APACrefatitle {Let's Verify Step by Step} {Let's verify step by step}.{\BBCQ}
\newblock
\BIn{} \APACrefbtitle {International Conference on Learning Representations (ICLR).} {International conference on learning representations (iclr).}
\newblock
\begin{APACrefURL} \url{https://arxiv.org/abs/2305.20050} \end{APACrefURL}
\newblock
\APACrefnote{arXiv:2305.20050}
\PrintBackRefs{\CurrentBib}

\bibitem [\protect \citeauthoryear {%
J.~Liu%
, Xia%
, Wang%
\BCBL {}\ \BBA {} Zhang%
}{%
J.~Liu%
\ \protect \BOthers {.}}{%
{\protect \APACyear {2023}}%
}]{%
liu2023evalplus}
\APACinsertmetastar {%
liu2023evalplus}%
\begin{APACrefauthors}%
Liu, J.%
, Xia, C\BPBI S.%
, Wang, Y.%
\BCBL {}\ \BBA {} Zhang, L.%
\end{APACrefauthors}%
\unskip\
\newblock
\APACrefYearMonthDay{2023}{}{}.
\newblock
{\BBOQ}\APACrefatitle {Is Your Code Generated by {ChatGPT} Really Correct? {R}igorous Evaluation of Large Language Models for Code Generation} {Is your code generated by {ChatGPT} really correct? {R}igorous evaluation of large language models for code generation}.{\BBCQ}
\newblock
\BIn{} \APACrefbtitle {Advances in Neural Information Processing Systems (NeurIPS)} {Advances in neural information processing systems (neurips)}\ (\BVOL~36, \BPGS\ 21558--21572).
\newblock
\begin{APACrefURL} \url{https://proceedings.neurips.cc/paper_files/paper/2023/hash/43e9d647ccd3e4b7b5baab53f0368686-Abstract-Conference.html} \end{APACrefURL}
\newblock
\APACrefnote{arXiv:2305.01210}
\PrintBackRefs{\CurrentBib}

\bibitem [\protect \citeauthoryear {%
X.~Liu%
\ \protect \BOthers {.}}{%
X.~Liu%
\ \protect \BOthers {.}}{%
{\protect \APACyear {2025}}%
}]{%
liu2025testadequacy}
\APACinsertmetastar {%
liu2025testadequacy}%
\begin{APACrefauthors}%
Liu, X.%
, Sun, X.%
, Bo, L.%
, Hu, Y.%
, Liu, X.%
\BCBL {}\ \BBA {} Ye, Z.%
\end{APACrefauthors}%
\unskip\
\newblock
\APACrefYearMonthDay{2025}{}{}.
\newblock
{\BBOQ}\APACrefatitle {Evaluating the Test Adequacy of Benchmarks for {LLMs} on Code Generation} {Evaluating the test adequacy of benchmarks for {LLMs} on code generation}.{\BBCQ}
\newblock
\APACjournalVolNumPages{Journal of Software: Evolution and Process}{37}{7}{e70034}.
\newblock
\begin{APACrefURL} \url{https://doi.org/10.1002/smr.70034} \end{APACrefURL}
\PrintBackRefs{\CurrentBib}

\bibitem [\protect \citeauthoryear {%
Madaan%
\ \protect \BOthers {.}}{%
Madaan%
\ \protect \BOthers {.}}{%
{\protect \APACyear {2023}}%
}]{%
madaan2023selfrefine}
\APACinsertmetastar {%
madaan2023selfrefine}%
\begin{APACrefauthors}%
Madaan, A.%
, Tandon, N.%
, Gupta, P.%
, Hallinan, S.%
, Gao, L.%
, Wiegreffe, S.%
, Alon, U.%
, Dziri, N.%
, Prabhumoye, S.%
, Yang, Y.%
, Gupta, S.%
, Majumder, B\BPBI P.%
, Hermann, K.%
, Welleck, S.%
, Yazdanbakhsh, A.%
\BCBL {}\ \BBA {} Clark, P.%
\end{APACrefauthors}%
\unskip\
\newblock
\APACrefYearMonthDay{2023}{}{}.
\newblock
{\BBOQ}\APACrefatitle {{Self-Refine}: Iterative Refinement with Self-Feedback} {{Self-Refine}: Iterative refinement with self-feedback}.{\BBCQ}
\newblock
\BIn{} \APACrefbtitle {Advances in Neural Information Processing Systems (NeurIPS)} {Advances in neural information processing systems (neurips)}\ (\BVOL~36).
\newblock
\begin{APACrefURL} \url{https://arxiv.org/abs/2303.17651} \end{APACrefURL}
\newblock
\APACrefnote{arXiv:2303.17651}
\PrintBackRefs{\CurrentBib}

\bibitem [\protect \citeauthoryear {%
Matton%
\ \protect \BOthers {.}}{%
Matton%
\ \protect \BOthers {.}}{%
{\protect \APACyear {2024}}%
}]{%
matton2024leakage}
\APACinsertmetastar {%
matton2024leakage}%
\begin{APACrefauthors}%
Matton, A.%
, Sherborne, T.%
, Aumiller, D.%
, Tommasone, E.%
, Alizadeh, M.%
, He, J.%
, Ma, R.%
, Voisin, M.%
, Gilsenan-McMahon, E.%
\BCBL {}\ \BBA {} Gall\'{e}, M.%
\end{APACrefauthors}%
\unskip\
\newblock
\APACrefYearMonthDay{2024}{}{}.
\newblock
{\BBOQ}\APACrefatitle {On Leakage of Code Generation Evaluation Datasets} {On leakage of code generation evaluation datasets}.{\BBCQ}
\newblock
\BIn{} \APACrefbtitle {Findings of the Association for Computational Linguistics: EMNLP 2024} {Findings of the association for computational linguistics: Emnlp 2024}\ (\BPGS\ 13215--13223).
\newblock
\begin{APACrefURL} \url{https://doi.org/10.18653/v1/2024.findings-emnlp.772} \end{APACrefURL}
\newblock
\APACrefnote{arXiv:2407.07565}
\PrintBackRefs{\CurrentBib}

\bibitem [\protect \citeauthoryear {%
Mayo%
}{%
Mayo%
}{%
{\protect \APACyear {2025}}%
}]{%
mayo2025severe}
\APACinsertmetastar {%
mayo2025severe}%
\begin{APACrefauthors}%
Mayo, D\BPBI G.%
\end{APACrefauthors}%
\unskip\
\newblock
\APACrefYearMonthDay{2025}{}{}.
\newblock
{\BBOQ}\APACrefatitle {Severe Testing: Error Statistics versus {Bayes} Factor Tests} {Severe testing: Error statistics versus {Bayes} factor tests}.{\BBCQ}
\newblock
\APACjournalVolNumPages{The British Journal for the Philosophy of Science}{}{}{}.
\newblock
\begin{APACrefURL} \url{https://doi.org/10.1086/736950} \end{APACrefURL}
\newblock
\APACrefnote{Advance online publication; article 736950}
\PrintBackRefs{\CurrentBib}

\bibitem [\protect \citeauthoryear {%
Mayo%
\ \BBA {} Spanos%
}{%
Mayo%
\ \BBA {} Spanos%
}{%
{\protect \APACyear {2006}}%
}]{%
mayospanos2006severe}
\APACinsertmetastar {%
mayospanos2006severe}%
\begin{APACrefauthors}%
Mayo, D\BPBI G.%
\BCBT {}\ \BBA {} Spanos, A.%
\end{APACrefauthors}%
\unskip\
\newblock
\APACrefYearMonthDay{2006}{}{}.
\newblock
{\BBOQ}\APACrefatitle {Severe Testing as a Basic Concept in a {Neyman--Pearson} Philosophy of Induction} {Severe testing as a basic concept in a {Neyman--Pearson} philosophy of induction}.{\BBCQ}
\newblock
\APACjournalVolNumPages{The British Journal for the Philosophy of Science}{57}{2}{323--357}.
\newblock
\begin{APACrefURL} \url{https://doi.org/10.1093/bjps/axl003} \end{APACrefURL}
\PrintBackRefs{\CurrentBib}

\bibitem [\protect \citeauthoryear {%
Niiniluoto%
}{%
Niiniluoto%
}{%
{\protect \APACyear {2014}}%
}]{%
niiniluoto2014progress}
\APACinsertmetastar {%
niiniluoto2014progress}%
\begin{APACrefauthors}%
Niiniluoto, I.%
\end{APACrefauthors}%
\unskip\
\newblock
\APACrefYearMonthDay{2014}{}{}.
\newblock
{\BBOQ}\APACrefatitle {Scientific Progress as Increasing Verisimilitude} {Scientific progress as increasing verisimilitude}.{\BBCQ}
\newblock
\APACjournalVolNumPages{Studies in History and Philosophy of Science Part A}{46}{}{73--77}.
\newblock
\begin{APACrefURL} \url{https://doi.org/10.1016/j.shpsa.2014.02.002} \end{APACrefURL}
\PrintBackRefs{\CurrentBib}

\bibitem [\protect \citeauthoryear {%
Pan%
\ \protect \BOthers {.}}{%
Pan%
\ \protect \BOthers {.}}{%
{\protect \APACyear {2026}}%
}]{%
pan2026coverrl}
\APACinsertmetastar {%
pan2026coverrl}%
\begin{APACrefauthors}%
Pan, T.%
, Yan, Y.%
, Wang, Z.%
, Zhang, R.%
, Hou, G.%
, Zhang, W.%
, Lu, W.%
, Xiao, J.%
\BCBL {}\ \BBA {} Shen, Y.%
\end{APACrefauthors}%
\unskip\
\newblock
\APACrefYearMonthDay{2026}{}{}.
\newblock
\APACrefbtitle {{CoVerRL}: Breaking the Consensus Trap in Label-Free Reasoning via Generator--Verifier Co-Evolution.} {{CoVerRL}: Breaking the consensus trap in label-free reasoning via generator--verifier co-evolution.}
\newblock
\APAChowpublished {arXiv preprint arXiv:2603.17775}.
\newblock
\begin{APACrefURL} \url{https://arxiv.org/abs/2603.17775} \end{APACrefURL}
\PrintBackRefs{\CurrentBib}

\bibitem [\protect \citeauthoryear {%
Popper%
}{%
Popper%
}{%
{\protect \APACyear {1959}}%
}]{%
popper1959logic}
\APACinsertmetastar {%
popper1959logic}%
\begin{APACrefauthors}%
Popper, K\BPBI R.%
\end{APACrefauthors}%
\unskip\
\newblock
\APACrefYear{1959}.
\newblock
\APACrefbtitle {The Logic of Scientific Discovery} {The logic of scientific discovery}.
\newblock
\APACaddressPublisher{London}{Hutchinson}.
\newblock
\APACrefnote{English translation of Logik der Forschung (1934)}
\PrintBackRefs{\CurrentBib}

\bibitem [\protect \citeauthoryear {%
Popper%
}{%
Popper%
}{%
{\protect \APACyear {1963}}%
}]{%
popper1963conjectures}
\APACinsertmetastar {%
popper1963conjectures}%
\begin{APACrefauthors}%
Popper, K\BPBI R.%
\end{APACrefauthors}%
\unskip\
\newblock
\APACrefYear{1963}.
\newblock
\APACrefbtitle {Conjectures and Refutations: The Growth of Scientific Knowledge} {Conjectures and refutations: The growth of scientific knowledge}.
\newblock
\APACaddressPublisher{London}{Routledge and Kegan Paul}.
\PrintBackRefs{\CurrentBib}

\bibitem [\protect \citeauthoryear {%
Sclar%
, Choi%
, Tsvetkov%
\BCBL {}\ \BBA {} Suhr%
}{%
Sclar%
\ \protect \BOthers {.}}{%
{\protect \APACyear {2024}}%
}]{%
sclar2024formspread}
\APACinsertmetastar {%
sclar2024formspread}%
\begin{APACrefauthors}%
Sclar, M.%
, Choi, Y.%
, Tsvetkov, Y.%
\BCBL {}\ \BBA {} Suhr, A.%
\end{APACrefauthors}%
\unskip\
\newblock
\APACrefYearMonthDay{2024}{}{}.
\newblock
{\BBOQ}\APACrefatitle {Quantifying Language Models' Sensitivity to Spurious Features in Prompt Design or: {H}ow {I} Learned to Start Worrying about Prompt Formatting} {Quantifying language models' sensitivity to spurious features in prompt design or: {H}ow {I} learned to start worrying about prompt formatting}.{\BBCQ}
\newblock
\BIn{} \APACrefbtitle {International Conference on Learning Representations (ICLR).} {International conference on learning representations (iclr).}
\newblock
\begin{APACrefURL} \url{https://arxiv.org/abs/2310.11324} \end{APACrefURL}
\newblock
\APACrefnote{arXiv:2310.11324}
\PrintBackRefs{\CurrentBib}

\bibitem [\protect \citeauthoryear {%
Shi%
, Fried%
, Ghazvininejad%
, Zettlemoyer%
\BCBL {}\ \BBA {} Wang%
}{%
Shi%
\ \protect \BOthers {.}}{%
{\protect \APACyear {2022}}%
}]{%
shi2022mbrexec}
\APACinsertmetastar {%
shi2022mbrexec}%
\begin{APACrefauthors}%
Shi, F.%
, Fried, D.%
, Ghazvininejad, M.%
, Zettlemoyer, L.%
\BCBL {}\ \BBA {} Wang, S\BPBI I.%
\end{APACrefauthors}%
\unskip\
\newblock
\APACrefYearMonthDay{2022}{}{}.
\newblock
{\BBOQ}\APACrefatitle {Natural Language to Code Translation with Execution} {Natural language to code translation with execution}.{\BBCQ}
\newblock
\BIn{} \APACrefbtitle {Proceedings of the 2022 Conference on Empirical Methods in Natural Language Processing (EMNLP).} {Proceedings of the 2022 conference on empirical methods in natural language processing (emnlp).}
\newblock
\begin{APACrefURL} \url{https://doi.org/10.18653/v1/2022.emnlp-main.231} \end{APACrefURL}
\newblock
\APACrefnote{MBR-Exec; arXiv:2204.11454}
\PrintBackRefs{\CurrentBib}

\bibitem [\protect \citeauthoryear {%
Shinn%
\ \protect \BOthers {.}}{%
Shinn%
\ \protect \BOthers {.}}{%
{\protect \APACyear {2023}}%
}]{%
shinn2023reflexion}
\APACinsertmetastar {%
shinn2023reflexion}%
\begin{APACrefauthors}%
Shinn, N.%
, Cassano, F.%
, Berman, E.%
, Gopinath, A.%
, Narasimhan, K.%
\BCBL {}\ \BBA {} Yao, S.%
\end{APACrefauthors}%
\unskip\
\newblock
\APACrefYearMonthDay{2023}{}{}.
\newblock
{\BBOQ}\APACrefatitle {{Reflexion}: Language Agents with Verbal Reinforcement Learning} {{Reflexion}: Language agents with verbal reinforcement learning}.{\BBCQ}
\newblock
\BIn{} \APACrefbtitle {Advances in Neural Information Processing Systems (NeurIPS)} {Advances in neural information processing systems (neurips)}\ (\BVOL~36).
\newblock
\begin{APACrefURL} \url{https://proceedings.neurips.cc/paper_files/paper/2023/hash/1b44b878bb782e6954cd888628510e90-Abstract-Conference.html} \end{APACrefURL}
\newblock
\APACrefnote{arXiv:2303.11366}
\PrintBackRefs{\CurrentBib}

\bibitem [\protect \citeauthoryear {%
Sinha%
\ \protect \BOthers {.}}{%
Sinha%
\ \protect \BOthers {.}}{%
{\protect \APACyear {2025}}%
}]{%
sinha2025refute}
\APACinsertmetastar {%
sinha2025refute}%
\begin{APACrefauthors}%
Sinha, S.%
, Goel, S.%
, Kumaraguru, P.%
, Geiping, J.%
, Bethge, M.%
\BCBL {}\ \BBA {} Prabhu, A.%
\end{APACrefauthors}%
\unskip\
\newblock
\APACrefYearMonthDay{2025}{}{}.
\newblock
{\BBOQ}\APACrefatitle {Can Language Models Falsify? Evaluating Algorithmic Reasoning with Counterexample Creation} {Can language models falsify? evaluating algorithmic reasoning with counterexample creation}.{\BBCQ}
\newblock
\BIn{} \APACrefbtitle {Conference on Language Modeling (COLM).} {Conference on language modeling (colm).}
\newblock
\begin{APACrefURL} \url{https://arxiv.org/abs/2502.19414} \end{APACrefURL}
\newblock
\APACrefnote{arXiv:2502.19414}
\PrintBackRefs{\CurrentBib}

\bibitem [\protect \citeauthoryear {%
Snell%
, Lee%
, Xu%
\BCBL {}\ \BBA {} Kumar%
}{%
Snell%
\ \protect \BOthers {.}}{%
{\protect \APACyear {2025}}%
}]{%
snell2024testtime}
\APACinsertmetastar {%
snell2024testtime}%
\begin{APACrefauthors}%
Snell, C.%
, Lee, J.%
, Xu, K.%
\BCBL {}\ \BBA {} Kumar, A.%
\end{APACrefauthors}%
\unskip\
\newblock
\APACrefYearMonthDay{2025}{}{}.
\newblock
{\BBOQ}\APACrefatitle {Scaling {LLM} Test-Time Compute Optimally Can Be More Effective than Scaling Model Parameters} {Scaling {LLM} test-time compute optimally can be more effective than scaling model parameters}.{\BBCQ}
\newblock
\BIn{} \APACrefbtitle {International Conference on Learning Representations (ICLR).} {International conference on learning representations (iclr).}
\newblock
\begin{APACrefURL} \url{https://arxiv.org/abs/2408.03314} \end{APACrefURL}
\newblock
\APACrefnote{arXiv:2408.03314}
\PrintBackRefs{\CurrentBib}

\bibitem [\protect \citeauthoryear {%
Song%
\ \protect \BOthers {.}}{%
Song%
\ \protect \BOthers {.}}{%
{\protect \APACyear {2025}}%
}]{%
song2025mindthegap}
\APACinsertmetastar {%
song2025mindthegap}%
\begin{APACrefauthors}%
Song, Y.%
, Zhang, H.%
, Eisenach, C.%
, Kakade, S.%
, Foster, D.%
\BCBL {}\ \BBA {} Ghai, U.%
\end{APACrefauthors}%
\unskip\
\newblock
\APACrefYearMonthDay{2025}{}{}.
\newblock
{\BBOQ}\APACrefatitle {Mind the Gap: Examining the Self-Improvement Capabilities of Large Language Models} {Mind the gap: Examining the self-improvement capabilities of large language models}.{\BBCQ}
\newblock
\BIn{} \APACrefbtitle {International Conference on Learning Representations (ICLR).} {International conference on learning representations (iclr).}
\newblock
\begin{APACrefURL} \url{https://arxiv.org/abs/2412.02674} \end{APACrefURL}
\newblock
\APACrefnote{arXiv:2412.02674}
\PrintBackRefs{\CurrentBib}

\bibitem [\protect \citeauthoryear {%
Valmeekam%
, Marquez%
\BCBL {}\ \BBA {} Kambhampati%
}{%
Valmeekam%
\ \protect \BOthers {.}}{%
{\protect \APACyear {2023}}%
}]{%
valmeekam2023self}
\APACinsertmetastar {%
valmeekam2023self}%
\begin{APACrefauthors}%
Valmeekam, K.%
, Marquez, M.%
\BCBL {}\ \BBA {} Kambhampati, S.%
\end{APACrefauthors}%
\unskip\
\newblock
\APACrefYearMonthDay{2023}{}{}.
\newblock
\APACrefbtitle {Can Large Language Models Really Improve by Self-Critiquing Their Own Plans?} {Can large language models really improve by self-critiquing their own plans?}
\newblock
\APAChowpublished {arXiv preprint arXiv:2310.08118; NeurIPS 2023 FMDM Workshop}.
\newblock
\begin{APACrefURL} \url{https://arxiv.org/abs/2310.08118} \end{APACrefURL}
\PrintBackRefs{\CurrentBib}

\bibitem [\protect \citeauthoryear {%
Vranje\v{s}%
\ \protect \BOthers {.}}{%
Vranje\v{s}%
\ \protect \BOthers {.}}{%
{\protect \APACyear {2024}}%
}]{%
vranjes2024design}
\APACinsertmetastar {%
vranjes2024design}%
\begin{APACrefauthors}%
Vranje\v{s}, D.%
, Ehrhardt, J.%
, Heesch, R.%
, Moddemann, L.%
, Steude, H\BPBI S.%
\BCBL {}\ \BBA {} Niggemann, O.%
\end{APACrefauthors}%
\unskip\
\newblock
\APACrefYearMonthDay{2024}{}{}.
\newblock
{\BBOQ}\APACrefatitle {Design Principles for Falsifiable, Replicable and Reproducible Empirical Machine Learning Research} {Design principles for falsifiable, replicable and reproducible empirical machine learning research}.{\BBCQ}
\newblock
\BIn{} \APACrefbtitle {35th International Conference on Principles of Diagnosis and Resilient Systems (DX 2024)} {35th international conference on principles of diagnosis and resilient systems (dx 2024)}\ (\BVOL~125, \BPGS\ 7:1--7:13).
\newblock
\APACaddressPublisher{}{Schloss Dagstuhl -- Leibniz-Zentrum f\"{u}r Informatik}.
\newblock
\begin{APACrefURL} \url{https://doi.org/10.4230/OASIcs.DX.2024.7} \end{APACrefURL}
\newblock
\APACrefnote{arXiv:2405.18077}
\PrintBackRefs{\CurrentBib}

\bibitem [\protect \citeauthoryear {%
E.~Wang%
\ \protect \BOthers {.}}{%
E.~Wang%
\ \protect \BOthers {.}}{%
{\protect \APACyear {2025}}%
}]{%
wang2025plansearch}
\APACinsertmetastar {%
wang2025plansearch}%
\begin{APACrefauthors}%
Wang, E.%
, Cassano, F.%
, Wu, C.%
, Bai, Y.%
, Song, W.%
, Nath, V.%
, Han, Z.%
, Hendryx, S.%
, Yue, S.%
\BCBL {}\ \BBA {} Zhang, H.%
\end{APACrefauthors}%
\unskip\
\newblock
\APACrefYearMonthDay{2025}{}{}.
\newblock
{\BBOQ}\APACrefatitle {Planning in Natural Language Improves {LLM} Search for Code Generation} {Planning in natural language improves {LLM} search for code generation}.{\BBCQ}
\newblock
\BIn{} \APACrefbtitle {International Conference on Learning Representations (ICLR).} {International conference on learning representations (iclr).}
\newblock
\begin{APACrefURL} \url{https://arxiv.org/abs/2409.03733} \end{APACrefURL}
\newblock
\APACrefnote{arXiv:2409.03733}
\PrintBackRefs{\CurrentBib}

\bibitem [\protect \citeauthoryear {%
X.~Wang%
\ \protect \BOthers {.}}{%
X.~Wang%
\ \protect \BOthers {.}}{%
{\protect \APACyear {2023}}%
}]{%
wang2022selfconsistency}
\APACinsertmetastar {%
wang2022selfconsistency}%
\begin{APACrefauthors}%
Wang, X.%
, Wei, J.%
, Schuurmans, D.%
, Le, Q\BPBI V.%
, Chi, E\BPBI H.%
, Narang, S.%
, Chowdhery, A.%
\BCBL {}\ \BBA {} Zhou, D.%
\end{APACrefauthors}%
\unskip\
\newblock
\APACrefYearMonthDay{2023}{}{}.
\newblock
{\BBOQ}\APACrefatitle {Self-Consistency Improves Chain of Thought Reasoning in Language Models} {Self-consistency improves chain of thought reasoning in language models}.{\BBCQ}
\newblock
\BIn{} \APACrefbtitle {International Conference on Learning Representations (ICLR).} {International conference on learning representations (iclr).}
\newblock
\begin{APACrefURL} \url{https://arxiv.org/abs/2203.11171} \end{APACrefURL}
\newblock
\APACrefnote{arXiv:2203.11171}
\PrintBackRefs{\CurrentBib}

\bibitem [\protect \citeauthoryear {%
Q.~Zhang%
\ \protect \BOthers {.}}{%
Q.~Zhang%
\ \protect \BOthers {.}}{%
{\protect \APACyear {2025}}%
}]{%
zhang2025darkside}
\APACinsertmetastar {%
zhang2025darkside}%
\begin{APACrefauthors}%
Zhang, Q.%
, Wang, D.%
, Qian, H.%
, Li, Y.%
, Zhang, T.%
, Huang, M.%
, Xu, K.%
, Li, H.%
, Liu, Y.%
\BCBL {}\ \BBA {} Qiu, H.%
\end{APACrefauthors}%
\unskip\
\newblock
\APACrefYearMonthDay{2025}{}{}.
\newblock
{\BBOQ}\APACrefatitle {Understanding the Dark Side of {LLMs}' Intrinsic Self-Correction} {Understanding the dark side of {LLMs}' intrinsic self-correction}.{\BBCQ}
\newblock
\BIn{} \APACrefbtitle {Proceedings of the 63rd Annual Meeting of the Association for Computational Linguistics (Volume 1: Long Papers)} {Proceedings of the 63rd annual meeting of the association for computational linguistics (volume 1: Long papers)}\ (\BPGS\ 27066--27101).
\newblock
\begin{APACrefURL} \url{https://doi.org/10.18653/v1/2025.acl-long.1314} \end{APACrefURL}
\newblock
\APACrefnote{arXiv:2412.14959}
\PrintBackRefs{\CurrentBib}

\bibitem [\protect \citeauthoryear {%
T.~Zhang%
\ \protect \BOthers {.}}{%
T.~Zhang%
\ \protect \BOthers {.}}{%
{\protect \APACyear {2023}}%
}]{%
zhang2023coderreviewer}
\APACinsertmetastar {%
zhang2023coderreviewer}%
\begin{APACrefauthors}%
Zhang, T.%
, Yu, T.%
, Hashimoto, T\BPBI B.%
, Lewis, M.%
, tau Yih, W.%
, Fried, D.%
\BCBL {}\ \BBA {} Wang, S\BPBI I.%
\end{APACrefauthors}%
\unskip\
\newblock
\APACrefYearMonthDay{2023}{}{}.
\newblock
{\BBOQ}\APACrefatitle {Coder Reviewer Reranking for Code Generation} {Coder reviewer reranking for code generation}.{\BBCQ}
\newblock
\BIn{} \APACrefbtitle {Proceedings of the 40th International Conference on Machine Learning (ICML)} {Proceedings of the 40th international conference on machine learning (icml)}\ (\BVOL~202, \BPGS\ 41832--41846).
\newblock
\begin{APACrefURL} \url{https://proceedings.mlr.press/v202/zhang23av.html} \end{APACrefURL}
\newblock
\APACrefnote{arXiv:2211.16490}
\PrintBackRefs{\CurrentBib}

\end{thebibliography}

\end{document}